\theoremstyle{plain}
\newtheorem{theorem}{Theorem}[section]
\newtheorem{proposition}[theorem]{Proposition}
\newtheorem{corollary}[theorem]{Corollary}
\newtheorem{lemma}[theorem]{Lemma}
\newtheorem{problem}[theorem]{Problem}
\theoremstyle{definition}
\newtheorem{definition}[theorem]{Definition}
\theoremstyle{remark}
\newtheorem{example}[theorem]{Example}
\newtheorem{remark}[theorem]{Remark}
\newcommand{\Cluster}{\mathcal{C}}
\newcommand{\NonFaulty}[1]{\mathop{\textnormal{\textsf{nf}}}(#1)}
\newcommand{\Faulty}[1]{\mathop{\textnormal{\textsf{f}}}(#1)}
\newcommand{\n}[1]{\mathbf{n}_{#1}}
\newcommand{\f}[1]{\mathbf{f}_{#1}}
\newcommand{\nf}[1]{\mathbf{nf}_{#1}}
\newcommand{\Replica}[1][r]{\textnormal{\textsc{#1}}}
\newcommand{\Decide}[1]{\textnormal{\textsc{#1}}}
\newcommand{\Sign}[2]{\langle #1 \rangle_{#2}}
\newcommand{\SignMessage}[3]{\Sign{\textnormal{\texttt{#1}} : #2}{#3}}
\newcommand{\SF}{\Phi}
\newcommand{\SFi}{\SF_{\min}}
\newcommand{\SFa}{\SF_{\max}}
\newcommand{\FaultyPos}[2]{\lVert #1; #2 \rVert_{\mathbf{f}}}
\newcommand{\List}[1]{\mathop{\mathsf{list}}(#1)}
\newcommand{\Permute}[1]{\mathop{\mathsf{perms}}(#1)}
\newcommand{\Repeat}[2]{#2^{:#1}}
\newcommand{\FC}[4]{\mathbb{F}(#1, #2, #3, #4)}
\newcommand{\PT}[3]{\mathbb{E}(#1, #2, #3)}
\newcommand{\Limit}[2]{#1|_{#2}}
\newcommand{\LM}[2]{\mathbb{M}(#1, #2)}
\newcommand{\Name}[1]{\textnormal{\textsc{#1}}}
\newcommand{\abs}[1]{\lvert #1 \rvert}
\newcommand{\union}{\cup}
\newcommand{\difference}{\setminus}
\newcommand{\BigOm}[1]{\Omega(#1)}
\newcommand{\dsfrac}[2]{#1/#2}
\newcommand{\subref}[2]{\ref{#1}(\ref{#1:#2})}
\renewcommand{\div}{\operatorname{div}}
\newcommand{\lfref}[2]{Line~\ref{#1:#2} of Figure~\ref{#1}}
\newcommand{\lsfref}[3]{Lines~\ref{#1:#2}--\ref{#1:#3} of Figure~\ref{#1}}
\newcommand{\Good}[1]{\cellcolor{green!20}#1}
\newcommand{\Mid}[1]{\cellcolor{yellow!20}#1}
\newcommand{\Bad}[1]{\cellcolor{red!20}#1}
\newcommand{\BadT}[1]{\color{red!75}\underline{#1}}
\newcommand{\CheckMark}{\Good{\faCheck}}
\newcommand{\FailMark}{\Bad{\faClose}}
\newcommand{\GETS}{:=}
\newcommand{\PROTOCALL}[2]{\textnormal{\Name{#1}(#2)}}
\newenvironment{myprotocol}[1]{
    \noindent\textbf{Protocol} {#1}\textbf{:}
    \smallskip
    \hrule
    \smallskip
    \begin{algorithmic}[1]
        \newcommand{\SPACE}{\item[]}
        \newcommand{\TITLE}[1]{\item[] \textbf{\underline{##1}:}\\[2pt]}

        \makeatletter
            \newcommand{\EVENT}[1]{\STATE \textbf{event} ##1 \textbf{do}\begin{ALC@g}}
            \newcommand{\ENDEVENT}{\end{ALC@g}}
        \makeatother
}{
    \end{algorithmic}
    \smallskip
    \hrule
    \smallskip
}
\tikzset{
    plot/.append style={baseline,scale=0.6},
    dot/.append style={circle,scale=0.35,draw=black,fill=black},
    label/.append style={align=center,font=\strut\footnotesize},
    >=Stealth
}
\definecolor{colA}{RGB}{230,159,0}
\definecolor{colB}{RGB}{86,180,233}
\definecolor{colC}{RGB}{0,158,115}
\definecolor{colD}{RGB}{240,228,66}
\definecolor{colE}{RGB}{0,114,178}
\definecolor{colF}{RGB}{213,94,0}
\definecolor{colG}{RGB}{204,121,167}
\pgfplotsset{
    compat=1.14,
    tick label style={font=\large},
    legend style={font=\Large,cells={anchor=west}},
    title style={font=\Large},
    label style={font=\Large},
    width=270pt,
    height=207pt,
    enlargelimits=0.05,
    every axis/.append style={
        ylabel near ticks,
        mark size=0.75pt,
        cycle list name=mycyclelist,
        font=\Large
    }
}
\title{Byzantine Cluster-Sending in Expected Constant Communication}
\author{Jelle Hellings\footnotemark[1]{ }\footnotemark[2] \and Mohammad Sadoghi\footnotemark[2]}
\date{\footnotemark[1]~Department of Computing and Software, McMaster University\\\makebox[0pt]{\footnotemark[2]~~Exploratory Systems Lab, Department of Computer Science, University of California, Davis}}
\begin{document}

\maketitle

\begin{abstract}
Traditional resilient systems operate on fully-replicated fault-tolerant clusters, which limits their scalability and performance. One way to make the step towards resilient high-performance systems that can deal with huge workloads, is by enabling independent fault-tolerant clusters to efficiently communicate and cooperate with each other, as this also enables the usage of high-performance techniques such as sharding and parallel processing. Recently, such inter-cluster communication was formalized as the \emph{Byzantine cluster-sending problem}, and worst-case optimal protocols have been proposed that solve this problem. Unfortunately, these protocols have an all-case \emph{linear complexity} in the size of the clusters involved. 

In this paper, we propose \emph{probabilistic cluster-sending techniques} that can reliably send messages from one Byzantine fault-tolerant cluster to another with only an \emph{expected constant message complexity}, this independent of the size of the clusters involved. Depending on the robustness of the clusters involved, our techniques require only  \emph{two-to-four} message round-trips. Furthermore, our protocols can support worst-case linear communication between clusters, which is optimal, and deal with asynchronous and unreliable communication. As such, our work provides a strong foundation for the further development of resilient high-performance systems.
\end{abstract}

\section{Introduction}
The promises of \emph{resilient data processing}, as provided by private and public blockchains~\cite{book,blockchain_dist,bit_pedigree}, has renewed interest in traditional consensus-based Byzantine fault-tolerant resilient systems~\cite{wild,pbftj,paxossimple}. Unfortunately, blockchains and other consensus-based systems typically rely on fully-replicated designs, which limits their scalability and performance. Consequently, these systems cannot deal with the ever-growing requirements in data processing~\cite{hypereal,idc}.

We believe that \emph{cluster-sending protocols}---which provide reliable communication \emph{between} Byzantine fault-tolerant clusters---have a central role towards bridging \emph{resilient} and \emph{high-performance} data processing. To illustrate this, we refer to the system designs in Figure~\ref{fig:example_intro}. In the traditional design on the \emph{left}, resilience is provided by a fully-replicated Byzantine fault-tolerant cluster, coordinated by some consensus protocol, that holds all data and process all requests. This traditional design has only limited performance, even with the best consensus protocols, and lacks scalability. To improve on the design of traditional systems,  one can employ the \emph{sharded} design on the \emph{right}. In this design, each cluster only holds part of the data. Consequently, each cluster only needs to process requests that affect data they hold. In this way, this sharded design improves performance by enabling \emph{parallel processing} of requests by different clusters, while also improving storage scalability.

\begin{figure}[h!]
\centering
\quad
\begin{tikzpicture}[scale=0.95]
    \filldraw[very thick,fill=yellow!10,draw=yellow!10!black!30] (-0.35, -0.35) rectangle (2.35, 2.35);
    \filldraw[very thick,fill=orange!10,draw=orange!10!black!30] (-0.25, -0.25) rectangle (2.25, 2.25);
    \node (r1) at (0, 2) {$\Replica_1$};
    \node (r2) at (2, 2) {$\Replica_2$};
    \node (r3) at (0, 0) {$\Replica_3$};
    \node (r4) at (2, 0) {$\Replica_4$};
    \path[<->,thin] (r1) edge (r2) edge (r3) edge (r4)
                    (r2) edge (r3) edge (r4)
                    (r3) edge (r4);
    \node[above,align=center] at (1, 2.25) {Cluster\\(All Data)\strut};
    \node[below,align=center] (ec) at (1, -0.65) {Requests\\(All Data)\strut} edge[thick,->] (1,-0.25);
\end{tikzpicture}
\hfill
\begin{tikzpicture}[scale=0.95]
    \filldraw[very thick,fill=yellow!10,draw=yellow!10!black!30] (-0.35, -0.35) rectangle (8.35, 2.35);
    \filldraw[very thick,fill=orange!10,draw=orange!10!black!30] (-0.25, -0.25) rectangle (2.25, 2.25);
    \node (r11) at (0, 2) {$\Replica[e]_1$};
    \node (r12) at (2, 2) {$\Replica[e]_2$};
    \node (r13) at (0, 0) {$\Replica[e]_3$};
    \node (r14) at (2, 0) {$\Replica[e]_4$};
    \path[<->,thin] (r11) edge (r12) edge (r13) edge (r14)
                    (r12) edge (r13) edge (r14)
                    (r13) edge (r14);
    \node[above,align=center] at (1, 2.25) {Cluster\\(European Data)\strut};

    \filldraw[very thick,fill=orange!10,draw=orange!10!black!30] (5.75, -0.25) rectangle (8.25, 2.25);
    \node (r21) at (6, 2) {$\Replica[a]_1$};
    \node (r22) at (8, 2) {$\Replica[a]_2$};
    \node (r23) at (6, 0) {$\Replica[a]_3$};
    \node (r24) at (8, 0) {$\Replica[a]_4$};
    \path[<->,thin] (r21) edge (r22) edge (r23) edge (r24)
                    (r22) edge (r23) edge (r24)
                    (r23) edge (r24);
    \node[above,align=center] at (7, 2.25) {Cluster\\(American Data)\strut};

    \path (2.3, 1.25) edge[double,<->] node[above] {Cluster Sending} node[below] {(coordination)} (5.7, 1.25);

    \node[below,align=center] (ec) at (1, -0.65) {Requests\\(European Data)\strut} edge[thick,->] (1,-0.25);
    \node[below,align=center] (ec) at (4, -0.65) {Requests\\(Mixed Data)\strut}    edge[thick,->] (4, -0.35);
    \node[below,align=center] (ac) at (7, -0.65) {Requests\\(American Data)\strut} edge[thick,->] (7,-0.25);
    \path[thick,black!75,->] (4, -0.35) edge[dotted,bend right=20] (2.3, 0.5) edge[dotted,bend left=20] (5.7, 0.5);
\end{tikzpicture}
\quad
\caption{On the \emph{left}, a traditional fully-replicated resilient system in which all four replicas each hold all data.  On the \emph{right}, a \emph{sharded} design in which each resilient cluster of four replicas holds only a part of the data.}\label{fig:example_intro}
\end{figure}
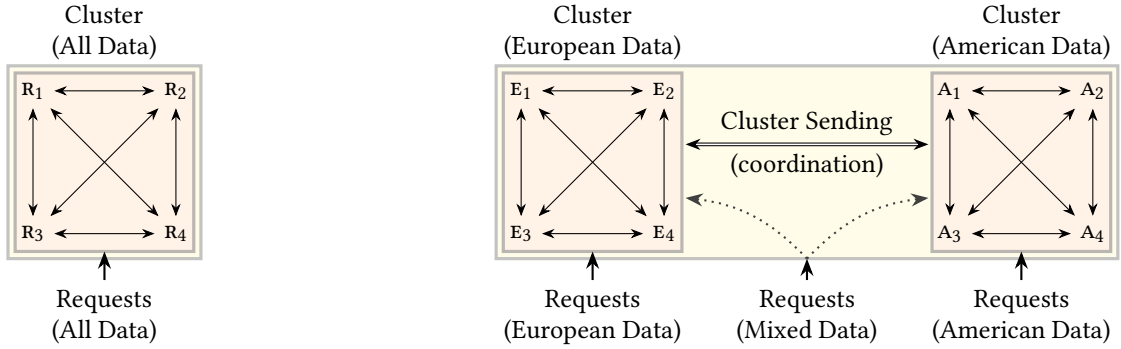

To support requests that affect data in several clusters in such a sharded design, the clusters need to be able to \emph{coordinate their operations}~\cite{chainspace,vldb}. Examples of such designs are \Name{AHL}~\cite{ahl}, \Name{ByShard}~\cite{byshard}, \Name{Chainspace}~\cite{chainspace}, and \Name{RingBFT}~\cite{ringbft}. One can base such coordination upon a \emph{cluster-sending protocol} that provides a basic Byzantine fault-tolerant communication primitive enabling communication between clusters~\cite{disc_csp}. Although cluster-sending has received some attention (e.g., as part of the design of \Name{GeoBFT}~\cite{vldb} and \Name{Chainspace}~\cite{chainspace}), and protocols with worst-case optimal complexity are known~\cite{disc_csp}, we believe there is still much room for improvement. In this paper, we introduce \emph{probabilistic cluster-sending} techniques that are able to provide low \emph{expected-case} message complexity (at the cost of higher communication latencies, a good trade-off in systems where inter-cluster network bandwidth is limited). In specific, our main contributions are as follows:
\begin{enumerate}
\item First, in Section~\ref{sec:cs_step}, we introduce the cluster-sending step \Name{cs-step} that attempts to send a value from a replica in the sending cluster to a replica in the receiving cluster in a verifiable manner and with a constant amount of inter-cluster communication.
\item Then, in Section~\ref{sec:random}, we introduce the \emph{Synchronous Probabilistic Cluster-Sending protocol} \Name{Pcs} that uses \Name{cs-step} with randomly selected sending and receiving replicas to provide cluster-sending in \emph{expected constant} steps. We also propose \emph{pruned-\Name{Pcs}} (\Name{Ppcs}), a fine-tuned version of \Name{Pcs} that guarantees termination.
\item In Section~\ref{sec:linear}, we propose the \emph{Synchronous Probabilistic Linear Cluster-Sending protocol} \Name{Plcs}, that uses \Name{cs-step} with a specialized randomized scheme to select replicas, this to provide cluster-sending in \emph{expected constant} steps and \emph{worst-case linear} steps, which is optimal.
\item Finally, in Section~\ref{sec:async}, we discuss how \Name{Pcs}, \Name{Ppcs}, and \Name{Plcs} can be generalized to operate in environments with \emph{asynchronous and unreliable communication}.
\end{enumerate}
A summary of our findings in comparison with existing techniques can be found in Table~\ref{tbl:summary}. In Section~\ref{sec:prelim}, we introduce the necessary terminology and notation, in Section~\ref{sec:related}, we compare with related work, and in Section~\ref{sec:concl}, we conclude on our findings. Finally, Appendix~\ref{app:first}--\ref{app:app_simply} provide complete proofs and other details not included in the main paper.

\begin{table}[t]
    \centering
    \caption{A comparison of \emph{cluster-sending protocols} that send a value from cluster $\Cluster_1$ with $\n{\Cluster_1}$ replicas, of which $\f{\Cluster_1}$ are faulty, to cluster $\Cluster_2$ with $\n{\Cluster_2}$ replicas, of which $\f{\Cluster_2}$ are faulty. For each protocol $P$, \emph{Protocol} specifies its name; \emph{Robustness} specifies the conditions $P$ puts on the clusters; \emph{Message Steps} specifies the number of messages exchanges $P$ performs; \emph{Optimal} specifies whether $P$ is worst-case optimal; and \emph{Unreliable} specifies whether $P$ can deal with unreliable communication.}\label{tbl:summary}
    \setlength\tabcolsep{2pt}
    \begin{minipage}{\textwidth}
    \small
    \renewcommand{\footnoterule}{\vskip-2pt}
    \centering
    \makebox[0pt]{
    \begin{tabular}{c|c|c|cc|cc}
        &Protocol&Robustness\footnote{Protocols that have different message step complexities depending on the robustness assumptions have been included for each of the robustness assumptions.}&\multicolumn{2}{c|}{Message Steps}&Optimal&Unreliable\\
        & & &(expected-case)&(worst-case)&\\
        \hline
        \hline
        &\Name{PBS-cs}~\cite{disc_csp}&\Mid{$\min(\n{\Cluster_1}, \n{\Cluster_2}) > \f{\Cluster_1} + \f{\Cluster_2}$}&\multicolumn{2}{c|}{\Mid{$\f{\Cluster_1} + \f{\Cluster_2}+1$}}&\CheckMark{}&\FailMark{}\\
        &\Name{PBS-cs}~\cite{disc_csp}&\Good{$\n{\Cluster_1} > 3\f{\Cluster_1}$, $\n{\Cluster_2} > 3\f{\Cluster_2}$}&\multicolumn{2}{c|}{\Mid{$\max(\n{\Cluster_1}, \n{\Cluster_2})$}}&\CheckMark{}&\FailMark{}\\
        \hline
        &\Name{GeoBFT}~\cite{vldb}&\Mid{$\n{\Cluster_1}=\n{\Cluster_2} > 3\max(\f{\Cluster_1}, \f{\Cluster_2})$}&\Mid{$\f{\Cluster_2} + 1$\footnote{Complexity when the coordinating primary in $\Cluster_1$ is non-faulty and communication is reliable.}}&\Bad{$\BigOm{\f{\Cluster_1}\n{\Cluster_2}}$}&\FailMark{}&\CheckMark{}\\
        \hline
        &\Name{Chainspace}~\cite{chainspace}&\Good{$\n{\Cluster_1} > 3\f{\Cluster_1}$, $\n{\Cluster_2} > 3\f{\Cluster_2}$}&\multicolumn{2}{c|}{\Bad{$\n{\Cluster_1}\n{\Cluster_2}$}}&\FailMark{}&\FailMark\\
        \hline
        \hline
        \multirow{5}{*}{\rotatebox[origin=c]{90}{This Paper}}
        &\Name{Ppcs}&\Good{$\n{\Cluster_1} > 2\f{\Cluster_1}$, $\n{\Cluster_2} > 2\f{\Cluster_2}$}&\Good{$4$}&\Bad{$(\f{\Cluster_1} + 1)(\f{\Cluster_2} + 1)$}&\FailMark{}&\CheckMark{}\\
        &\Name{Ppcs}&\Good{$\n{\Cluster_1} > 3\f{\Cluster_1}$, $\n{\Cluster_2} > 3\f{\Cluster_2}$}&\Good{$2\frac{1}{4}$}&\Bad{$(\f{\Cluster_1} + 1)(\f{\Cluster_2} + 1)$}&\FailMark{}&\CheckMark{}\\
        \hhline{~------}
        &\Name{Plcs}&\Mid{$\min(\n{\Cluster_1}, \n{\Cluster_2}) > \f{\Cluster_1} + \f{\Cluster_2}$}&\Good{$4$}&\Good{$\f{\Cluster_1} + \f{\Cluster_2}+1$}&\CheckMark{}&\CheckMark{}\\
        &\Name{Plcs}&\Mid{$\min(\n{\Cluster_1}, \n{\Cluster_2}) > 2(\f{\Cluster_1} + \f{\Cluster_2})$}&\Good{$2\frac{1}{4}$}&\Good{$\f{\Cluster_1} + \f{\Cluster_2}+1$}&\CheckMark{}&\CheckMark{}\\
        &\Name{Plcs}&\Good{$\n{\Cluster_1} > 3\f{\Cluster_1}$, $\n{\Cluster_2} > 3\f{\Cluster_2}$}&\Good{$3$}&\Good{$\max(\n{\Cluster_1}, \n{\Cluster_2})$}&\CheckMark{}&\CheckMark{}\\
        \hline
    \end{tabular}}
    \end{minipage}
\end{table}

\section{The Cluster-Sending Problem}\label{sec:prelim}
Before we present our probabilistic cluster-sending techniques, we first introduce all necessary terminology and notation. The formal model we use is based on the formalization of the cluster-sending problem provided by Hellings et al.~\cite{disc_csp}. If $S$ is a set of replicas, then $\Faulty{S} \subseteq S$ denotes the \emph{faulty replicas} in $S$, whereas $\NonFaulty{S} = S \difference \Faulty{S}$ denotes the \emph{non-faulty replicas} in $S$. We write $\n{S} = \abs{S}$, $\f{S} = \abs{\Faulty{S}}$, and $\nf{S} = \abs{\NonFaulty{S}} = \n{S}-\f{S}$ to denote the number of replicas, faulty replicas, and non-faulty replicas in $S$, respectively. A \emph{cluster} $\Cluster$ is a finite set of replicas. We consider clusters with \emph{Byzantine replicas} that behave in arbitrary manners. In specific, if $\Cluster$ is a cluster, then any malicious adversary can control the replicas in $\Faulty{\Cluster}$ at any time, but adversaries cannot bring non-faulty replicas under their control.
\begin{definition}
Let $\Cluster_1, \Cluster_2$ be disjoint clusters. The \emph{cluster-sending problem} is the problem of sending a value $v$ from $\Cluster_1$ to $\Cluster_2$ such that
\textnormal{\textbf{(1)}} all non-faulty replicas in $\NonFaulty{\Cluster_2}$ \Decide{receive} the value $v$;
\textnormal{\textbf{(2)}} all non-faulty replicas in $\NonFaulty{\Cluster_1}$ \Decide{confirm} that the value $v$ was received by all non-faulty replicas in $\NonFaulty{\Cluster_2}$; and
\textnormal{\textbf{(3)}} non-faulty replicas in $\NonFaulty{\Cluster_2}$ only receive a value $v$ if all non-faulty replicas in $\NonFaulty{\Cluster_1}$ \Decide{agree} upon sending $v$.
\end{definition}

We assume that there is no limitation on local communication within a cluster, while global communication between clusters is costly. This model is supported by practice, where communication between wide-area deployments of clusters is up-to-two orders of magnitudes more expensive than communication within a cluster~\cite{vldb}.

We assume that each cluster can make \emph{local decisions} among all non-faulty replicas, e.g., via a \emph{consensus protocol} such as \Name{Pbft} or \Name{Paxos}~\cite{pbftj,paxossimple}. Furthermore, we assume that the replicas in each cluster can certify such local decisions via a \emph{signature scheme}. E.g., a cluster $\Cluster$ can certify a consensus decision on some message $m$ by collecting a set of signatures for $m$ of $\f{\Cluster}+1$ replicas in $\Cluster$, guaranteeing one such signature is from a non-faulty replica (which would only signs values on which consensus is reached). We write $\Sign{m}{\Cluster}$ to denote a message $m$ certified by $\Cluster$. To minimize the size of certified messages, one can utilize a threshold signature scheme~\cite{rsasign}. To enable decision making and message certification, we assume, for every cluster $\Cluster$, $\n{\Cluster} > 2\f{\Cluster}$~\cite{netbound,dolevstrong,byzgen,consbound}. Lastly, we assume that there is a common source of randomness for all non-faulty replicas of each cluster, e.g., via a distributed fault-tolerant random coin~\cite{coin,bdrandcoin}. 

\section{The Cluster-Sending Step}\label{sec:cs_step}
 
If communication is reliable and one knows non-faulty replicas $\Replica_1 \in \NonFaulty{\Cluster_1}$ and $\Replica_2 \in \NonFaulty{\Cluster_2}$, then cluster-sending a value $v$ from $\Cluster_1$ to $\Cluster_2$ can be done via a straightforward \emph{cluster-sending step}. Under these conditions, one can simply instruct $\Replica_1$ to send $v$ to $\Replica_2$. When $\Replica_2$ receives $v$, it can disperse $v$ locally in $\Cluster_2$. Unfortunately, we do not know which replicas are faulty and which are non-faulty. Furthermore, it is practically impossible to reliably determine which replicas are non-faulty, as faulty replicas can appear well-behaved to most replicas, while interfering with the operations of only some non-faulty replicas.

To deal with faulty replicas when utilizing the above \emph{cluster-sending step}, one needs to build in sufficient safeguards to detect \emph{failure} of $\Replica_1$, of $\Replica_2$, or of the communication between them. To do so, we add receive and confirmation phases to the sketched cluster-sending step. During the \emph{receive phase}, the receiving replica $\Replica_2$ must construct a proof $P$ that it received and dispersed $v$ locally in $\Cluster_2$ and then send this proof back to $\Replica_1$. Finally, during the \emph{confirmation phase}, $\Replica_1$ can utilize $P$ to prove to all other replicas in $\Cluster_1$ that the cluster-sending step was successful. The pseudo-code of this \emph{cluster-sending step protocol} \Name{cs-step} can be found in Figure~\ref{fig:cs_step}. We have the following:

\begin{figure}[t]
    \begin{myprotocol}{\PROTOCALL{cs-step}{$\Replica_1$, $\Replica_2$, $v$}, with $\Replica_1 \in \Cluster_1$ and $\Replica_2 \in \Cluster_2$}
        \REQUIRE Each replica in $\NonFaulty{\Cluster_1}$ decided \Decide{agree} on sending $v$ to $\Cluster_2$ (and can construct $\SignMessage{send}{v,\ \Cluster_2}{\Cluster_1}$).
        \ENSURE 
            \textbf{(i)} If communication is reliable, $\Replica_1 \in \NonFaulty{\Cluster_1}$, and $\Replica_2 \in \NonFaulty{\Cluster_2}$, then $\Replica_1$ decides \Decide{confirm} on $v$.
            \textbf{(ii)} If a replica in $\NonFaulty{\Cluster_2}$ decides \Decide{receive} on $v$, then all replicas in $\NonFaulty{\Cluster_1}$ decided \Decide{agree} on sending $v$ to $\Cluster_2$.
            \textbf{(iii)} If a replica in $\NonFaulty{\Cluster_1}$ decides \Decide{confirm} on $v$, then all replicas in $\NonFaulty{\Cluster_2}$ decided \Decide{receive} on $v$  and all replicas in $\NonFaulty{\Cluster_1}$ eventually decide \Decide{confirm} on $v$ (whenever communication becomes reliable). 
        \SPACE
        \TITLE{The \emph{cluster-sending step} for $\Replica_1$ and $\Replica_2$}
        \STATE Instruct $\Replica_1$ to send $\SignMessage{send}{v,\ \Cluster_2}{\Cluster_1}$ to $\Replica_2$.\label{fig:cs_step:send_v}
        \SPACE
        \TITLE{The \emph{receive role} for $\Cluster_2$} 
        \EVENT{$\Replica_2 \in \NonFaulty{\Cluster_2}$ receives $m \GETS \SignMessage{send}{v,\ \Cluster_2}{\Cluster_1}$ from $\Replica_1 \in \Cluster_1$}\label{fig:cs_step:receive_event}
            \IF{$\Replica_2$ does not have consensus on $m$}
                \STATE Use \emph{local consensus} on $m$ and construct $\SignMessage{proof}{m}{\Cluster_ 2}$.\label{fig:cs_step:receive_v}
                \STATE \COMMENT{After local consensus, each replica in $\NonFaulty{\Cluster_2}$ decides \Decide{receive} on $v$.}\label{fig:cs_step:receive}
            \ENDIF
            \STATE Send $\SignMessage{proof}{m}{\Cluster_2}$ to $\Replica_1$.\label{fig:cs_step:send_confirm}
        \ENDEVENT
        \SPACE
        \TITLE{The \emph{confirmation role} for $\Cluster_1$}
            \EVENT{$\Replica_1 \in \NonFaulty{\Cluster_1}$ receives $m_p \GETS \SignMessage{proof}{\SignMessage{send}{v,\ \Cluster_2}{\Cluster_1}}{\Cluster_2}$ from $\Replica_2 \in \Cluster_2$}\label{fig:cs_step:replied_event}
            \IF{$\Replica_1$ does not have consensus on $m_p$}
                \STATE Use \emph{local consensus} on $m_p$.\label{fig:cs_step:confirm_v}
                \STATE \COMMENT{After local consensus, each replica in $\NonFaulty{\Cluster_1}$ decides \Decide{confirm} on $v$.}\label{fig:cs_step:confirm}
            \ENDIF
        \ENDEVENT
    \end{myprotocol}
    \caption{The Cluster-sending step protocol \PROTOCALL{cs-step}{$\Replica_1$, $\Replica_2$, $v$}. In this protocol, $\Replica_1$ tries to send $v$ to $\Replica_2$, which will succeed if both $\Replica_1$ and $\Replica_2$ are non-faulty.}\label{fig:cs_step}
\end{figure}

\begin{proposition}\label{prop:cs_step}
Let $\Cluster_1, \Cluster_2$ be disjoint clusters with $\Replica_1 \in \Cluster_1$ and $\Replica_2 \in \Cluster_2$. If $\Cluster_1$ satisfies the pre-conditions of \PROTOCALL{cs-step}{$\Replica_1$, $\Replica_2$, $v$}, then execution of \PROTOCALL{cs-step}{$\Replica_1$, $\Replica_2$, $v$} satisfies the post-conditions and will exchange at most two messages between $\Cluster_1$ and $\Cluster_2$.
\end{proposition}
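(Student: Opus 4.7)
The plan is to verify the three post-conditions (i)--(iii) in turn and count inter-cluster messages along the way. The message count is immediate: inspecting the protocol, the only inter-cluster transmissions are the \texttt{send} message on \lfref{fig:cs_step}{send_v} and the \texttt{proof} message on \lfref{fig:cs_step}{send_confirm}, so at most two messages cross between $\Cluster_1$ and $\Cluster_2$.

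For post-condition (i), I would trace an execution in which $\Replica_1 \in \NonFaulty{\Cluster_1}$, $\Replica_2 \in \NonFaulty{\Cluster_2}$, and communication is reliable. By the pre-condition, $\Replica_1$ can construct $\SignMessage{send}{v,\ \Cluster_2}{\Cluster_1}$ and, by reliability, $\Replica_2$ receives it. The event on \lfref{fig:cs_step}{receive_event} then drives local consensus in $\Cluster_2$ on $m$, after which $\Replica_2$ sends $\SignMessage{proof}{m}{\Cluster_2}$ to $\Replica_1$; by reliability this arrives, triggering the event on \lfref{fig:cs_step}{replied_event} and the local consensus on \lfref{fig:cs_step}{confirm_v}, so $\Replica_1$ decides \Decide{confirm} on $v$.

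For post-condition (ii), suppose some replica in $\NonFaulty{\Cluster_2}$ decides \Decide{receive} on $v$. Then local consensus on a message of the form $\SignMessage{send}{v,\ \Cluster_2}{\Cluster_1}$ has been reached in $\Cluster_2$. The certificate $\Sign{\cdot}{\Cluster_1}$ is, by the signature scheme from Section~\ref{sec:prelim}, a collection of $\f{\Cluster_1}+1$ signatures of replicas in $\Cluster_1$, at least one of which belongs to a non-faulty replica. Since non-faulty replicas only sign values on which they have reached local consensus, the \Decide{agree} decision was made by local consensus in $\Cluster_1$, which means every replica in $\NonFaulty{\Cluster_1}$ decided \Decide{agree} on sending $v$ to $\Cluster_2$. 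Post-condition (iii) is established symmetrically: a \Decide{confirm} decision by some replica in $\NonFaulty{\Cluster_1}$ requires local consensus on $\SignMessage{proof}{m}{\Cluster_2}$, and the $\Cluster_2$-certificate inside it propagates back, via the same one-non-faulty-signer argument, to a local consensus on $m$ in $\Cluster_2$ and hence a \Decide{receive} decision by every replica in $\NonFaulty{\Cluster_2}$. Finally, once some non-faulty replica in $\Cluster_1$ holds $m_p$, the local consensus step on \lfref{fig:cs_step}{confirm_v} ensures all non-faulty replicas in $\Cluster_1$ eventually decide \Decide{confirm}, under the understanding that they can re-exchange $m_p$ locally whenever the inter-cluster channel on which $m_p$ first arrived eventually stabilizes.

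The only delicate step is the appeal to the signature scheme in (ii) and (iii): one must argue that a certified message is strictly stronger than ``some replica signed this'' and in fact witnesses that local consensus was reached in the issuing cluster. This follows from the combination of (a) the quorum size $\f{\Cluster}+1$ forcing a non-faulty signatory, and (b) the convention that non-faulty replicas sign only after local consensus; I would state this once explicitly at the start of the proof so that both (ii) and (iii) can invoke it uniformly.
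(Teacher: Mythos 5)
Your proposal is correct and follows essentially the same route as the paper's proof in Appendix~\ref{app:first}: post-condition (i) by tracing the reliable execution through the receive and confirmation phases, and post-conditions (ii) and (iii) by the unforgeability of the cluster certificates (a quorum of $\f{\Cluster}+1$ signatures forces a non-faulty signatory, who only signs after local consensus), plus the observation that the local consensus steps propagate the decisions to all non-faulty replicas. Your explicit count of the two inter-cluster messages (\lfref{fig:cs_step}{send_v} and \lfref{fig:cs_step}{send_confirm}) is a welcome addition that the paper's appendix proof leaves implicit.
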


In the following sections, we show how to use the cluster-sending step in the construction of cluster-sending protocols. In Section~\ref{sec:random}, we introduce synchronous protocols that provide \emph{expected constant message complexity}. Then, in Section~\ref{sec:linear}, we introduce synchronous protocols that additionally provide \emph{worst-case linear message complexity}, which is optimal. Finally, in Section~\ref{sec:async}, we show how to extend the presented techniques to asynchronous communication.

\section{Probabilistic Cluster-Sending with Random Replica Selection}\label{sec:random}

In the previous section, we introduced \Name{cs-step}, the cluster-sending step protocol that succeeds whenever the participating replicas are non-faulty and communication is reliable. Using \Name{cs-step}, we build a three-step  protocol that cluster-sends a value $v$ from $\Cluster_1$ to $\Cluster_2$:
\begin{enumerate}
\item First, the replicas in $\NonFaulty{\Cluster_1}$ reach agreement and decide \Decide{agree} on sending $v$ to $\Cluster_2$.
\item Then, the replicas in $\NonFaulty{\Cluster_1}$ perform a \emph{probabilistic cluster-sending step} by electing replicas $\Replica_1 \in \Cluster_1$ and $\Replica_2 \in \Cluster_2$ fully at random, after which \PROTOCALL{cs-step}{$\Replica_1$, $\Replica_2$, $v$} is executed.
\item Finally, each replicas in $\NonFaulty{\Cluster_1}$ waits for the completion of \PROTOCALL{cs-step}{$\Replica_1$, $\Replica_2$, $v$} If the waiting replicas decided \Decide{confirm} on $v$ during this wait, then cluster-sending is successful. Otherwise, we repeat the previous step.
\end{enumerate}
To enable replicas to wait for completion, we assume \emph{synchronous} inter-cluster communication: messages sent by non-faulty replicas will be delivered within some known bounded delay. Such \emph{synchronous} systems can be modeled by \emph{pulses}~\cite{min_sync,partsync}:

\begin{definition}
A system is \emph{synchronous} if all inter-cluster communication happens in \emph{pulses} such that every message sent in a pulse will be received in the same pulse.
\end{definition}

The pseudo-code of the resultant \emph{Synchronous Probabilistic Cluster-Sending protocol} \Name{Pcs} can be found in Figure~\ref{fig:cs}. Next, we prove that \Name{Pcs} performs cluster-sending with expected constant message complexity.

\begin{figure}
    \begin{myprotocol}{\PROTOCALL{Pcs}{$\Cluster_1$, $\Cluster_2$, $v$}}
        \STATE Use \emph{local consensus} on $v$ and construct $\SignMessage{send}{v,\ \Cluster_2}{\Cluster_1}$.\label{fig:cs:agree_v}
        \STATE \COMMENT{After local consensus, each replica in $\NonFaulty{\Cluster_1}$ decides \Decide{agree} on $v$.}\label{fig:cs:agree}
        \REPEAT\label{fig:cs:loop}
            \STATE Choose replicas $(\Replica_1, \Replica_2) \in \Cluster_1 \times \Cluster_2$, fully at random.\label{fig:cs:choose}
            \STATE \PROTOCALL{cs-step}{$\Replica_1$, $\Replica_2$, $v$}
            \STATE Wait \emph{three} global pulses.\label{fig:cs:wait}
        \UNTIL{$\Cluster_1$ reaches consensus on $\SignMessage{proof}{\SignMessage{send}{v,\ \Cluster_2}{\Cluster_1}}{\Cluster_2}$.}\label{fig:cs:loop_end}
    \end{myprotocol}
    \caption{The Synchronous Probabilistic Cluster-Sending protocol \PROTOCALL{Pcs}{$\Cluster_1$, $\Cluster_2$, $v$} that cluster-sends a value $v$ from $\Cluster_1$ to $\Cluster_2$.}\label{fig:cs}
\end{figure}

\begin{theorem}\label{thm:cs}
Let $\Cluster_1, \Cluster_2$ be disjoint clusters. If communication is synchronous, then execution of \PROTOCALL{Pcs}{$\Cluster_1$, $\Cluster_2$, $v$} results in cluster-sending $v$ from $\Cluster_1$ to $\Cluster_2$. The execution performs two local consensus steps in $\Cluster_1$, one local consensus step in $\Cluster_2$, and is expected to make $\dsfrac{(\n{\Cluster_1}\n{\Cluster_2})}{(\nf{\Cluster_1}\nf{\Cluster_1})}$ cluster-sending steps.
\end{theorem}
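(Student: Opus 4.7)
My plan is to split the argument into a correctness part and a complexity part, leaning heavily on Proposition~\ref{prop:cs_step} for the per-iteration behaviour of \Name{cs-step}. First, I would observe that the initial local consensus at line~\ref{fig:cs:agree_v} makes every replica in $\NonFaulty{\Cluster_1}$ decide \Decide{agree} on $v$ and lets any of them produce $\SignMessage{send}{v,\ \Cluster_2}{\Cluster_1}$, establishing the pre-condition of \PROTOCALL{cs-step}{$\Replica_1$, $\Replica_2$, $v$} for every pair chosen in the loop. The \Decide{receive} property for $\Cluster_2$ then follows from part~(ii) of Proposition~\ref{prop:cs_step}: no non-faulty replica in $\Cluster_2$ ever decides \Decide{receive} without all of $\NonFaulty{\Cluster_1}$ having first decided \Decide{agree}. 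The \Decide{confirm} property for $\Cluster_1$ follows from part~(iii): as soon as $\Cluster_1$ reaches consensus on the proof and exits the loop, every non-faulty replica in $\Cluster_2$ has already decided \Decide{receive} and every non-faulty replica in $\Cluster_1$ will decide \Decide{confirm}.

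For complexity, I would count local-consensus invocations and bound the expected number of iterations separately. Line~\ref{fig:cs:agree_v} gives one consensus in $\Cluster_1$. Inside \Name{cs-step}, the receive role in $\Cluster_2$ only runs consensus on $m = \SignMessage{send}{v,\ \Cluster_2}{\Cluster_1}$ when no consensus on $m$ exists yet; since $m$ is identical across iterations, this occurs at most once total, giving the single consensus in $\Cluster_2$. The same idempotency argument on the confirmation role yields at most one further consensus in $\Cluster_1$, totalling two. For the loop count, each iteration draws $(\Replica_1, \Replica_2)$ uniformly and independently from $\Cluster_1 \times \Cluster_2$, so both are non-faulty with probability at least $\nf{\Cluster_1}\nf{\Cluster_2}/(\n{\Cluster_1}\n{\Cluster_2})$. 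When this happens, part~(i) of Proposition~\ref{prop:cs_step} and synchronous delivery ensure the proof arrives within the three-pulse wait and triggers the termination condition. The iteration count is therefore dominated by a geometric random variable, whose expectation is $\n{\Cluster_1}\n{\Cluster_2}/(\nf{\Cluster_1}\nf{\Cluster_2})$, matching the theorem (after reading the apparent typo $\nf{\Cluster_1}\nf{\Cluster_1}$ as $\nf{\Cluster_1}\nf{\Cluster_2}$).

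The main obstacle I anticipate is handling adversarial faulty behaviour across iterations uniformly. A faulty $\Replica_1$ might refuse to send, relay garbage, or resend in later pulses; a faulty $\Replica_2$ might abort the receive role or reply late. I need to verify that none of these behaviours violate safety -- the signature scheme prevents forgery of $\SignMessage{send}{v,\ \Cluster_2}{\Cluster_1}$ or of the returned proof, and the ``no duplicate consensus'' guards keep the consensus counts correct -- and that they never help termination in a way that breaks the geometric bound: since the bound pessimistically treats every round with a faulty participant as a failure, any cooperative faulty behaviour only improves the expectation. The remaining subtlety is the three-pulse wait, which I would justify by decomposing a successful \Name{cs-step} round into a send pulse from $\Replica_1$ to $\Replica_2$, a local consensus and reply from $\Replica_2$, and a final pulse during which $\Cluster_1$ observes the proof and reaches consensus on it; synchronous pulses then guarantee completion within the wait window whenever both chosen replicas are non-faulty.
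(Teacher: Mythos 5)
Your proposal is correct and follows essentially the same route as the paper's proof: establish the \Name{cs-step} pre-conditions from the initial consensus, invoke Proposition~\ref{prop:cs_step} for per-iteration correctness, model the loop as independent Bernoulli trials with success probability $\dsfrac{(\nf{\Cluster_1}\nf{\Cluster_2})}{(\n{\Cluster_1}\n{\Cluster_2})}$, and argue that the idempotency guards limit the protocol to two consensus steps in $\Cluster_1$ and one in $\Cluster_2$. Your reading of $\nf{\Cluster_1}\nf{\Cluster_1}$ as a typo for $\nf{\Cluster_1}\nf{\Cluster_2}$ is also the intended one.
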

\begin{proof}
Due to \lsfref{fig:cs}{agree_v}{agree}, \PROTOCALL{Pcs}{$\Cluster_1$, $\Cluster_2$, $v$} establishes the pre-conditions for any execution of \Name{cs-step}($\Replica_1$, $\Replica_2$, $v$) with $\Replica_1 \in \Cluster_1$ and $\Replica_2 \in \Cluster_2$. Using the correctness of \Name{cs-step} (Proposition~\ref{prop:cs_step}), we conclude that \PROTOCALL{Pcs}{$\Cluster_1$, $\Cluster_2$, $v$} results in cluster-sending $v$ from $\Cluster_1$ to $\Cluster_2$ whenever the replicas $(\Replica_1, \Replica_2) \in \Cluster_1 \times \Cluster_2$ chosen at \lfref{fig:cs}{choose} are non-faulty. As the replicas $(\Replica_1, \Replica_2) \in \Cluster_1 \times \Cluster_2$ are chosen fully at random, we have probability $p_i = \dsfrac{\nf{\Cluster_i}}{\n{\Cluster_i}}$, $i \in \{1,2\}$, of choosing $\Replica_i \in \NonFaulty{\Cluster_i}$. The probabilities $p_1$ and $p_2$ are independent of each other. Consequently, the probability on choosing $(\Replica_1, \Replica_2) \in \NonFaulty{\Cluster_1} \times \NonFaulty{\Cluster_2}$ is $p = p_1p_2 = \dsfrac{(\nf{\Cluster_1}\nf{\Cluster_2})}{(\n{\Cluster_1}\n{\Cluster_2})}$. As such, each iteration of the loop at \lfref{fig:cs}{loop} can be modeled as an independent \emph{Bernoulli trial} with probability of success $p$, and the expected number of iterations of the loop is $p^{-1} = \dsfrac{(\n{\Cluster_1}\n{\Cluster_2})}{(\nf{\Cluster_1}\nf{\Cluster_1})}$.

Finally, we prove that each local consensus step needs to be performed only once. To do so, we consider the local consensus steps triggered by the loop at \lfref{fig:cs}{loop}. These are the local consensus steps at Lines~\ref{fig:cs_step:receive_v} and~\ref{fig:cs_step:confirm_v} of Figure~\ref{fig:cs_step}. The local consensus step at Line~\ref{fig:cs_step:receive_v} can be initiated by a faulty replica $\Replica_2$. After this single local consensus step reaches consensus on message $m \GETS \SignMessage{send}{v,\ \Cluster_2}{\Cluster_1}$, each replica in $\NonFaulty{\Cluster_2}$ reaches consensus on $m$, decides \Decide{receive} on $v$, and can construct $m_p \GETS \SignMessage{proof}{m}{\Cluster_2}$, this independent of the behavior of $\Replica_2$. Hence, a single local consensus step for $m$ in $\Cluster_2$ suffices, and no replica in $\NonFaulty{\Cluster_2}$ will participate in future consensus steps for $m$. An analogous argument proves that a single local consensus step for $m_p$ in $\Cluster_1$, performed at \lfref{fig:cs_step}{confirm_v}, suffices.
\end{proof}

In typical fault-tolerant clusters, at least half of the replicas are non-faulty (e.g., in synchronous systems with Byzantine failures that use digital signatures, or in systems that only deal with crashes) or at least two-third of the replicas are non-faulty (e.g., asynchronous systems). In these systems, \Name{Pcs} is expected to only performs a few cluster-sending steps:

\begin{corollary}\label{cor:cs}
Let $\Cluster_1, \Cluster_2$ be disjoint clusters. If communication is synchronous, then the expected number of cluster-sending steps performed by \PROTOCALL{Pcs}{$\Cluster_1$, $\Cluster_2$, $v$} is upper bounded by $4$ if $\n{\Cluster_1} > 2\f{\Cluster_1}$ and $\n{\Cluster_2} > 2\f{\Cluster_2}$; and by $2\frac{1}{4}$ if $\n{\Cluster_1} > 3\f{\Cluster_1}$ and $\n{\Cluster_2} > 3\f{\Cluster_2}$.
\end{corollary}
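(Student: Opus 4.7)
The plan is to reduce the corollary to a direct calculation using Theorem~\ref{thm:cs}, which already provides the exact expected cluster-sending step count as $\dsfrac{(\n{\Cluster_1}\n{\Cluster_2})}{(\nf{\Cluster_1}\nf{\Cluster_2})}$. Since this expression factorizes as a product of per-cluster ratios $\n{\Cluster_i}/\nf{\Cluster_i}$ with $\nf{\Cluster_i} = \n{\Cluster_i} - \f{\Cluster_i}$, the entire task is to bound each factor individually under the two robustness assumptions and multiply.

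First I would rewrite $\n{\Cluster_i}/\nf{\Cluster_i} = 1/(1 - \f{\Cluster_i}/\n{\Cluster_i})$ so that the hypotheses translate directly into bounds on $\f{\Cluster_i}/\n{\Cluster_i}$. Under the assumption $\n{\Cluster_i} > 2\f{\Cluster_i}$, we get $\f{\Cluster_i}/\n{\Cluster_i} < 1/2$, hence $\n{\Cluster_i}/\nf{\Cluster_i} < 2$, and the product over $i \in \{1,2\}$ is strictly less than $4$. Under the stronger assumption $\n{\Cluster_i} > 3\f{\Cluster_i}$, we get $\f{\Cluster_i}/\n{\Cluster_i} < 1/3$, hence $\n{\Cluster_i}/\nf{\Cluster_i} < 3/2$, and the product is strictly less than $9/4 = 2\frac{1}{4}$. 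Combining these two bounds with the formula from Theorem~\ref{thm:cs} immediately yields the claim.

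There is essentially no hard step here; the only thing worth a brief sanity remark is that the bounds are sharp in the limit (e.g., instantiating $\n{\Cluster_i} = 2\f{\Cluster_i}+1$ and letting $\f{\Cluster_i} \to \infty$ drives each ratio toward $2$ from below), which justifies quoting $4$ and $2\frac{1}{4}$ as the tight upper bounds rather than slightly smaller constants. No further machinery is needed beyond Theorem~\ref{thm:cs} and elementary arithmetic.
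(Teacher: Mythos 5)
Your proposal is correct and is exactly the argument the paper intends: the corollary is an immediate consequence of the expected-step count $\n{\Cluster_1}\n{\Cluster_2}/(\nf{\Cluster_1}\nf{\Cluster_2})$ from Theorem~\ref{thm:cs}, bounded factor-by-factor as you do (and your use of $\nf{\Cluster_2}$ in the denominator silently corrects the paper's typo $\nf{\Cluster_1}\nf{\Cluster_1}$). Nothing further is needed.
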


In \Name{Pcs}, the replicas $(\Replica_1, \Replica_2) \in \Cluster_1 \times \Cluster_2$ are chosen fully at random and \emph{with replacement}, as \Name{Pcs} does not retain any information on \emph{failed} probabilistic steps. In the worst case, this prevents \emph{termination}, as the same pair of replicas can be picked repeatedly. Furthermore, \Name{Pcs} does not prevent the choice of faulty replicas whose failure could be detected. We can easily improve on this, as the \emph{failure} of a probabilistic step provides some information on the chosen replicas. In specific, we have the following technical properties:

\begin{lemma}\label{lem:cs_prune}
Let $\Cluster_1, \Cluster_2$ be disjoint clusters. We assume synchronous communication and assume that each replica in $\NonFaulty{\Cluster_1}$ decided \Decide{agree} on sending $v$ to $\Cluster_2$.
\begin{enumerate}
    \item\label{lem:cs_prune:norep} Let $(\Replica_1, \Replica_2) \in \Cluster_1 \times \Cluster_2$. If \PROTOCALL{cs-step}{$\Replica_1$, $\Replica_2$, $v$} fails to cluster-send $v$, then either $\Replica_1 \in \Faulty{\Cluster_1}$, $\Replica_2 \in \Cluster_2$, or both.
    \item\label{lem:cs_prune:c1f} Let $\Replica_1 \in \Cluster_1$. If \PROTOCALL{cs-step}{$\Replica_1$, $\Replica_2$, $v$} fails to cluster-send $v$ for $\f{\Cluster_2} + 1$ distinct replicas $\Replica_2 \in \Cluster_2$, then $\Replica_1 \in \Faulty{\Cluster_1}$.
    \item\label{lem:cs_prune:c2f} Let $\Replica_2 \in \Cluster_2$. If \PROTOCALL{cs-step}{$\Replica_1$, $\Replica_2$, $v$} fails to cluster-send $v$ for $\f{\Cluster_1} + 1$ distinct replicas $\Replica_1 \in \Cluster_1$, then $\Replica_2 \in \Faulty{\Cluster_2}$.
\end{enumerate}
\end{lemma}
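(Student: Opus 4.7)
The plan is to derive part~(\ref{lem:cs_prune:norep}) as the direct contrapositive of post-condition~(i) of \Name{cs-step} (Proposition~\ref{prop:cs_step}), and then to obtain parts~(\ref{lem:cs_prune:c1f}) and~(\ref{lem:cs_prune:c2f}) by a simple pigeonhole count on the faulty replicas, combined with part~(\ref{lem:cs_prune:norep}).

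For~(\ref{lem:cs_prune:norep}), I would first pin down the operational meaning of \emph{failure}: \PROTOCALL{cs-step}{$\Replica_1$, $\Replica_2$, $v$} fails precisely when no replica in $\NonFaulty{\Cluster_1}$ decides \Decide{confirm} on $v$ during the execution. Under synchronous communication, every inter-cluster message sent in a pulse is delivered in that same pulse, so inter-cluster communication is reliable; moreover, the standing assumption that each replica in $\NonFaulty{\Cluster_1}$ decided \Decide{agree} on $v$ is exactly the pre-condition required by \Name{cs-step}. The contrapositive of post-condition~(i) of Proposition~\ref{prop:cs_step} then immediately yields $\Replica_1 \in \Faulty{\Cluster_1}$ or $\Replica_2 \in \Faulty{\Cluster_2}$ (or both), giving~(\ref{lem:cs_prune:norep}).

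For~(\ref{lem:cs_prune:c1f}), I would argue by contradiction: suppose $\Replica_1 \in \NonFaulty{\Cluster_1}$ and consider the $\f{\Cluster_2}+1$ distinct $\Replica_2 \in \Cluster_2$ for which the step fails. Applying~(\ref{lem:cs_prune:norep}) to each such failure, combined with $\Replica_1 \in \NonFaulty{\Cluster_1}$, forces every such $\Replica_2$ to lie in $\Faulty{\Cluster_2}$; but $\abs{\Faulty{\Cluster_2}} = \f{\Cluster_2}$ cannot contain $\f{\Cluster_2}+1$ distinct elements, a contradiction, so $\Replica_1 \in \Faulty{\Cluster_1}$. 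Part~(\ref{lem:cs_prune:c2f}) follows by the entirely symmetric argument with the roles of $\Cluster_1, \Cluster_2$ and of $\Replica_1, \Replica_2$ interchanged. The only real subtlety, and the main obstacle in writing the proof cleanly, is fixing the meaning of \emph{fails to cluster-send $v$} so that post-condition~(i) of Proposition~\ref{prop:cs_step} can be invoked as a clean contrapositive; once that is done, no further protocol-level or probabilistic reasoning is needed.
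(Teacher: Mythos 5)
Your proposal is correct and follows essentially the same route as the paper's proof: part~(1) is the contrapositive of post-condition~(i) of Proposition~\ref{prop:cs_step} (with synchrony supplying reliability and the standing \Decide{agree} assumption supplying the pre-condition), and parts~(2) and~(3) are the same pigeonhole count on $\Faulty{\Cluster_2}$ (resp.\ $\Faulty{\Cluster_1}$), which the paper phrases by exhibiting a non-faulty partner among the $\f{\Cluster_2}+1$ replicas rather than by contradiction.
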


We can apply the properties of Lemma~\ref{lem:cs_prune} to actively \emph{prune} which replica pairs \Name{Pcs} considers (\lfref{fig:cs}{choose}). Notice that pruning via Lemma~\subref{lem:cs_prune}{norep} simply replaces choosing replica pairs \emph{with replacement}, as done by \Name{Pcs}, by choosing replica pairs \emph{without replacement}, this without further reducing the possible search space. Indeed, if we \emph{only} apply this pruning step, then, in the worst case, we still need $\f{\Cluster_1}\nf{\Cluster_2} + \nf{\Cluster_1}\f{\Cluster_2} + \f{\Cluster_1}\f{\Cluster_2} + 1$ cluster-sending steps. Pruning via Lemma~\subref{lem:cs_prune}{c1f} does reduce the search space, however, as each replica in $\Cluster_1$ will only be paired with a subset of $\f{\Cluster_2} + 1$ replicas in $\Cluster_2$. Likewise, pruning via Lemma~\subref{lem:cs_prune}{c2f} also reduces the search space. We obtain the \emph{Pruned Synchronous Probabilistic Cluster-Sending protocol} (\Name{Ppcs}) by applying all three prune steps to \Name{Pcs}. By construction, Theorem~\ref{thm:cs}, and Lemma~\ref{lem:cs_prune}, we conclude:

\begin{corollary}\label{cor:pcs}
Let $\Cluster_1, \Cluster_2$ be disjoint clusters. If communication is synchronous, then execution of \PROTOCALL{Ppcs}{$\Cluster_1$, $\Cluster_2$, $v$} results in cluster-sending $v$ from $\Cluster_1$ to $\Cluster_2$. The execution performs two local consensus steps in $\Cluster_1$, one local consensus step in $\Cluster_2$, is expected to make less than $\dsfrac{(\n{\Cluster_1}\n{\Cluster_2})}{(\nf{\Cluster_1}\nf{\Cluster_1})}$ cluster-sending steps, and makes worst-case $(\f{\Cluster_1}+1)(\f{\Cluster_2} +1)$ cluster-sending steps.
\end{corollary}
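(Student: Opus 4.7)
The plan is to derive all four claims directly from Theorem~\ref{thm:cs}, Lemma~\ref{lem:cs_prune}, and the observation that \Name{Ppcs} differs from \Name{Pcs} only in the set of candidate pairs considered at \lfref{fig:cs}{choose}.

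For correctness and the local-consensus counts, I would first note that each pruning rule removes only pairs that are repeats (by Lemma~\subref{lem:cs_prune}{norep}) or that involve a provably faulty replica (by Lemma~\subref{lem:cs_prune}{c1f} and Lemma~\subref{lem:cs_prune}{c2f}); in particular, no pair in $\NonFaulty{\Cluster_1} \times \NonFaulty{\Cluster_2}$ is ever removed. Since a \Name{cs-step} on such a pair succeeds by Proposition~\ref{prop:cs_step}, the loop terminates just as in Theorem~\ref{thm:cs}. The counts of two local-consensus steps in $\Cluster_1$ and one in $\Cluster_2$ depend only on the semantics of \Name{cs-step} and not on how the pair $(\Replica_1, \Replica_2)$ is chosen, so they carry over unchanged.

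For the expected bound, I would observe that after any sequence of prunings the surviving pair set still contains every good pair in $\NonFaulty{\Cluster_1} \times \NonFaulty{\Cluster_2}$, so the probability that a uniformly random surviving pair is good is at least $p = \nf{\Cluster_1}\nf{\Cluster_2} / (\n{\Cluster_1}\n{\Cluster_2})$, with strict inequality once any pruning has taken effect. A coupling with the Bernoulli-trial analysis of Theorem~\ref{thm:cs} then yields an expected number of cs-steps strictly less than $1/p = \dsfrac{(\n{\Cluster_1}\n{\Cluster_2})}{(\nf{\Cluster_1}\nf{\Cluster_2})}$, matching the stated bound.

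The worst-case bound is the most delicate step, and I expect it to be the main obstacle. The plan is to charge each cs-step to its $\Cluster_1$-endpoint and bound the budget of every $\Replica_1 \in \Cluster_1$: a non-faulty $\Replica_1$ must succeed within $\f{\Cluster_2}+1$ distinct partners by pigeonhole and Proposition~\ref{prop:cs_step}, while a faulty $\Replica_1$ is pruned by Lemma~\subref{lem:cs_prune}{c1f} after at most $\f{\Cluster_2}+1$ failures. If we schedule the search so that each $\Replica_1$'s partner budget is exhausted before a new candidate is considered, the worst scenario has all $\f{\Cluster_1}$ faulty $\Replica_1$'s contributing $\f{\Cluster_2}+1$ failed steps each, followed by one non-faulty $\Replica_1$ finishing the task in at most $\f{\Cluster_2}+1$ further steps, for the claimed total of $(\f{\Cluster_1}+1)(\f{\Cluster_2}+1)$. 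The real work is justifying that this "commit to one $\Replica_1$ at a time" schedule can be realized within the pruning-restricted random selection at \lfref{fig:cs}{choose}; I would address this by treating the schedule as a particular tie-breaking rule consistent with uniform sampling on the surviving pairs, so that the counting argument applies to \Name{Ppcs} exactly as stated.
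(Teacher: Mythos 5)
Your handling of correctness, the consensus counts, and the expected-case bound is sound and matches what the paper intends (the paper offers no written proof of this corollary beyond ``by construction, Theorem~\ref{thm:cs}, and Lemma~\ref{lem:cs_prune}''): pruning never removes a pair from $\NonFaulty{\Cluster_1} \times \NonFaulty{\Cluster_2}$, every removed pair contains a faulty replica, so each trial's success probability only increases and the Bernoulli comparison gives an expectation below $\dsfrac{(\n{\Cluster_1}\n{\Cluster_2})}{(\nf{\Cluster_1}\nf{\Cluster_2})}$ (strictness degenerates only when $\f{\Cluster_1} = \f{\Cluster_2} = 0$, an edge case the paper also ignores).

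The gap is in the worst-case step, exactly where you flagged it, and your proposed repair does not close it: a ``commit to one $\Replica_1$ at a time'' schedule is not a tie-breaking rule consistent with uniform sampling over the surviving pairs --- it is a different selection rule. If \Name{Ppcs} samples uniformly from all untried pairs involving no provably-faulty replica, the bound $(\f{\Cluster_1}+1)(\f{\Cluster_2}+1)$ fails with positive probability. Take $\n{\Cluster_1} = \n{\Cluster_2} = 3$, $\Faulty{\Cluster_1} = \{\Replica_{1,1}\}$, $\Faulty{\Cluster_2} = \{\Replica_{2,1}\}$: the order $(\Replica_{1,1},\Replica_{2,2})$, $(\Replica_{1,2},\Replica_{2,1})$, $(\Replica_{1,1},\Replica_{2,3})$, $(\Replica_{1,3},\Replica_{2,1})$, $(\Replica_{1,2},\Replica_{2,2})$ respects all three pruning rules at every step ($\Replica_{1,1}$ and $\Replica_{2,1}$ are each exposed only after their second distinct failure) yet takes $5 > 4 = (\f{\Cluster_1}+1)(\f{\Cluster_2}+1)$ cluster-sending steps. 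In general, the degree constraints of Lemmas~\subref{lem:cs_prune}{c1f} and~\subref{lem:cs_prune}{c2f} only bound the failed attempts by $\f{\Cluster_1}(\f{\Cluster_2}+1) + \f{\Cluster_2}(\f{\Cluster_1}+1)$, which exceeds $(\f{\Cluster_1}+1)(\f{\Cluster_2}+1)-1$ by $\f{\Cluster_1}\f{\Cluster_2}$. To obtain the stated worst case you must build the sequential schedule into the protocol definition (fix a random $\Replica_1$, exhaust up to $\f{\Cluster_2}+1$ random partners, then advance), and then the expected-case claim must be re-derived for that distribution rather than imported from Theorem~\ref{thm:cs}: it is \emph{not} true that every trial under that schedule succeeds with probability at least $\dsfrac{(\nf{\Cluster_1}\nf{\Cluster_2})}{(\n{\Cluster_1}\n{\Cluster_2})}$, since conditioning on a failure makes the committed $\Replica_1$ more likely to be faulty. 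The paper never pins down the selection rule of \Name{Ppcs}, so this ambiguity is inherited from the source, but your closing sentence asserts a compatibility that does not hold.
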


\section{Worst-Case Linear-Time Probabilistic Cluster-Sending}\label{sec:linear}
In the previous section, we introduced \Name{Pcs} and \Name{Ppcs}, two probabilistic cluster-sending protocols that can cluster-send a value $v$ from $\Cluster_1$ to $\Cluster_2$ with expected constant cost. Unfortunately, \Name{Pcs} does not guarantee termination, while \Name{Ppcs} has a worst-case \emph{quadratic complexity}. To improve on this, we need to improve the scheme by which we select replica pairs $(\Replica_1, \Replica_2) \in \Cluster_1 \times \Cluster_2$ that we use in cluster-sending steps. The straightforward manner to guarantee a worst-case \emph{linear complexity} is by using a scheme that can select only up-to-$n = \max(\n{\Cluster_1}, \n{\Cluster_2})$ distinct pairs $(\Replica_1, \Replica_2) \in \Cluster_1 \times \Cluster_2$. To select $n$ replica pairs from $\Cluster_1 \times \Cluster_2$, we will proceed in two steps.
\begin{enumerate}
    \item We generate list $S_1$ of $n$ replicas taken from $\Cluster_1$ and list $S_2$ of $n$ replicas taken from $\Cluster_2$.
    \item Then, we choose permutations $P_1 \in \Permute{S_1}$ and $P_2 \in \Permute{S_2}$ fully at random, and interpret each pair $(P_1[i], P_2[i])$. $0 \leq i < n$, as one of the chosen replica pairs.
\end{enumerate}
We use the first step to deal with any differences in the sizes of $\Cluster_1$ and $\Cluster_2$, and we use the second step to introduce sufficient randomness in our protocol.

Next, we introduce some notations to simplify reasoning about the above list-based scheme. If $R$ is a set of replicas, then $\List{R}$ is the list consisting of the replicas in $R$ placed in a predetermined order (e.g., on increasing replica identifier). If $S$ is a list of replicas, then we write $\Faulty{S}$ to denote the faulty replicas in $S$ and $\NonFaulty{S}$ to denote the non-faulty replicas in $S$, and we write $\n{S} = \abs{S}$, $\f{S} = \abs{\{ i \mid (0 \leq i < \n{S}) \land S[i] \in \Faulty{S}\} }$, and $\nf{S} = \n{S} - \f{S}$ to denote the number of positions in $S$ with replicas, faulty replicas, and non-faulty replicas, respectively. If $(P_1, P_2)$ is a pair of equal-length lists of $n = \abs{P_1} = \abs{P_2}$ replicas, then we say that the $i$-th position is a \emph{faulty position} if either $P_1[i] \in \Faulty{P_1}$ or $P_2[i] \in \Faulty{P_2}$. We write $\FaultyPos{P_1}{P_2}$ to denote the number of \emph{faulty positions} in $(P_1, P_2)$. As faulty positions can only be constructed out of the $\f{P_1}$ faulty replicas in $P_1$ and the $\f{P_2}$ faulty replicas in $P_2$, we must have $\max(\f{P_1}, \f{P_2}) \leq \FaultyPos{P_1}{P_2} \leq \min(n, \f{P_1} + \f{P_2})$,

\begin{example}\label{ex:list_pairs}
Consider clusters $\Cluster_1, \Cluster_2$ with $S_1 = \List{\Cluster_1} = [ \Replica_{1,1}, \dots, \Replica_{1,5}]$, $\Faulty{\Cluster_1} = \{ \Replica_{1,1}, \Replica_{1,2} \}$, $S_2 = \List{\Cluster_2} = [ \Replica_{2,1}, \dots, \Replica_{2, 5}]$, and $\Faulty{\Cluster_2} = \{\Replica_{2,1}, \Replica_{2,2} \}$. The set $\Permute{S_1} \times \Permute{S_2}$ contains $5!^2 = 14400$ list pairs. Now, consider the list pairs $(P_1, P_2), \allowbreak (Q_1, Q_2), \allowbreak (R_1, R_2) \in \Permute{S_1} \times \Permute{S_2}$ with
\[
    \begin{array}{r@{{} = [}c@{,}c@{,}c@{,}c@{,}c@{]}@{,\qquad}r@{{} = [}c@{,}c@{,}c@{,}c@{,}c@{]}}
        P_1&\BadT{\Replica_{1,1}}&\Replica_{1,5}&\BadT{\Replica_{1,2}}&\Replica_{1,4}&\Replica_{1,3}&
        P_2&\BadT{\Replica_{2,1}}&\Replica_{2,3}&\BadT{\Replica_{2,2}}&\Replica_{2,5}&\Replica_{2,4}\\
        Q_1&\BadT{\Replica_{1,1}}&\Replica_{1,3}&\Replica_{1,5}&\Replica_{1,4}&\BadT{\Replica_{1,2}}&
        Q_2&\Replica_{2,5}&\Replica_{2,4}&\Replica_{2,3}&\BadT{\Replica_{2,2}}&\BadT{\Replica_{2,1}}\\
        R_1&\Replica_{1,5}&\Replica_{1,4}&\Replica_{1,3}&\BadT{\Replica_{1,2}}&\BadT{\Replica_{1,1}}&
        R_2&\BadT{\Replica_{2,1}}&\BadT{\Replica_{2,2}}&\Replica_{2,3}&\Replica_{2,4}&\Replica_{2,5}.
    \end{array}
\]
We have underlined the faulty replicas in each list, and we have $\FaultyPos{P_1}{P_2} = 2 = \f{S_1} = \f{S_2}$, $\FaultyPos{Q_1}{Q_2} = 3$, and $\FaultyPos{R_1}{R_2} = 4 = \f{S_1} + \f{S_2}$.
\end{example}

In the following, we will use a \emph{list-pair function} $\SF$ to compute the initial list-pair $(S_1, S_2)$ of $n$ replicas taken from $\Cluster_1$ and $\Cluster_2$, respectively. Next, we build a cluster-sending protocol that uses $\SF$ to compute $S_1$ and $S_2$, uses randomization to choose $n$ replica pairs from $S_1 \times S_2$, and, finally, performs cluster-sending steps using only these $n$ replica pairs. The pseudo-code of the resultant \emph{Synchronous Probabilistic Linear Cluster-Sending protocol} \Name{Plcs} can be found in Figure~\ref{fig:plcs}. Next, we prove that \Name{Plcs} performs cluster-sending with a worst-case linear number of cluster-sending steps.

\begin{figure}
    \begin{myprotocol}{\PROTOCALL{Plcs}{$\Cluster_1$, $\Cluster_2$, $v$, $\SF$}}
        \STATE Use \emph{local consensus} on $v$ and construct $\SignMessage{send}{v,\ \Cluster_2}{\Cluster_1}$.\label{fig:plcs:agree_v}
        \STATE \COMMENT{After reaching local consensus, each replica in $\NonFaulty{\Cluster_1}$ decides \Decide{agree} on $v$.}\label{fig:plcs:agree}
        \STATE Let $(S_1, S_2) \GETS \SF(\Cluster_1, \Cluster_2)$.
        \STATE Choose $(P_1, P_2) \in \Permute{S_1} \times \Permute{S_2}$ fully at random.\label{fig:plcs:permute}
        \STATE $i \GETS 0$.
        \REPEAT\label{fig:plcs:loop}
            \STATE \PROTOCALL{cs-step}{$P_1[i]$, $P_2[i]$, $v$}\label{fig:plcs:choose}
            \STATE Wait \emph{three} global pulses.\label{fig:plcs:wait}
            \STATE $i \GETS i + 1$.
        \UNTIL{$\Cluster_1$ reaches consensus on $\SignMessage{proof}{\SignMessage{send}{v,\ \Cluster_2}{\Cluster_1}}{\Cluster_2}$.}\label{fig:plcs:loop_end}
    \end{myprotocol}
    \caption{The Synchronous Probabilistic Linear Cluster-Sending protocol \PROTOCALL{Plcs}{$\Cluster_1$, $\Cluster_2$, $v$, $\SF$} that cluster-sends a value $v$ from $\Cluster_1$ to $\Cluster_2$ using list-pair function $\SF$.}\label{fig:plcs}
\end{figure}

\begin{proposition}\label{prop:lpcs}
Let $\Cluster_1, \Cluster_2$ be disjoint clusters and let $\SF$ be a list-pair function with $(S_1, S_2) \GETS \SF(\Cluster_1, \Cluster_2)$ and $n = \n{S_1} = \n{S_2}$. If communication is synchronous and $n > \f{S_1} + \f{S_2}$, then execution of \PROTOCALL{Plcs}{$\Cluster_1$, $\Cluster_2$, $v$, $\SF$} results in cluster-sending $v$ from $\Cluster_1$ to $\Cluster_2$. The execution performs two local consensus steps in $\Cluster_1$, one local consensus step in $\Cluster_2$, and makes worst-case $\f{S_1} + \f{S_2} + 1$ cluster-sending steps.
\end{proposition}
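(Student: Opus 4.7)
The plan is to argue correctness, bound local consensus steps, and then bound cluster-sending steps by a pigeonhole count on faulty positions. To start, the initial local consensus step at \lfref{fig:plcs}{agree_v} ensures each replica in $\NonFaulty{\Cluster_1}$ decides \Decide{agree} on $v$ and can construct $\SignMessage{send}{v,\ \Cluster_2}{\Cluster_1}$; this is exactly the pre-condition required by Proposition~\ref{prop:cs_step} for every invocation \PROTOCALL{cs-step}{$P_1[i]$, $P_2[i]$, $v$} executed at \lfref{fig:plcs}{choose}. By Proposition~\ref{prop:cs_step}(i), any call with $P_1[i] \in \NonFaulty{\Cluster_1}$ and $P_2[i] \in \NonFaulty{\Cluster_2}$ succeeds under synchronous communication, in which case $\Cluster_1$ reaches consensus on the required $\SignMessage{proof}{\SignMessage{send}{v,\ \Cluster_2}{\Cluster_1}}{\Cluster_2}$ and the loop of \lsfref{fig:plcs}{loop}{loop_end} terminates with cluster-sending having succeeded.

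For the cluster-sending step count, I would observe that $(P_1, P_2) \in \Permute{S_1} \times \Permute{S_2}$ preserves faulty replica counts, so $\f{P_1} = \f{S_1}$ and $\f{P_2} = \f{S_2}$. A position $i$ is non-faulty exactly when $P_1[i] \in \NonFaulty{\Cluster_1}$ and $P_2[i] \in \NonFaulty{\Cluster_2}$, and by the bound noted just before Example~\ref{ex:list_pairs}, $\FaultyPos{P_1}{P_2} \leq \min(n, \f{P_1} + \f{P_2}) = \f{S_1} + \f{S_2}$, where the equality uses the hypothesis $n > \f{S_1} + \f{S_2}$. Hence among the first $\f{S_1} + \f{S_2} + 1$ positions $0, 1, \dots, \f{S_1} + \f{S_2}$ (all of which exist since $n > \f{S_1} + \f{S_2}$), at least one is non-faulty, so the loop terminates after at most $\f{S_1} + \f{S_2} + 1$ cluster-sending steps.

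Finally, for the local consensus count, I would reuse the argument in the proof of Theorem~\ref{thm:cs} essentially verbatim. The first local consensus occurs at \lfref{fig:plcs}{agree_v} before the loop. Inside the loop, the only additional consensus steps are those at \lfref{fig:cs_step}{receive_v} and \lfref{fig:cs_step}{confirm_v}: once $\Cluster_2$ reaches consensus on $m \GETS \SignMessage{send}{v,\ \Cluster_2}{\Cluster_1}$, every non-faulty replica in $\Cluster_2$ can construct $\SignMessage{proof}{m}{\Cluster_2}$ without any further consensus, and likewise $\Cluster_1$ only needs one consensus on $\SignMessage{proof}{m}{\Cluster_2}$ regardless of how many replicas in $\Cluster_2$ later echo this message. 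Thus the total count is two local consensus steps in $\Cluster_1$ and one in $\Cluster_2$.

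The main obstacle is simply lining up the faulty-position bound with the permutation invariance $\f{P_i} = \f{S_i}$ and the hypothesis $n > \f{S_1} + \f{S_2}$ to ensure that the guaranteed non-faulty position actually lies within the range of indices the loop traverses; everything else follows from Proposition~\ref{prop:cs_step} and a direct reuse of the deduplication argument from the proof of Theorem~\ref{thm:cs}.
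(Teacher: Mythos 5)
Your proposal is correct and follows essentially the same route as the paper's proof: establish the \Name{cs-step} pre-conditions via the initial consensus, use the permutation invariance $\f{P_i} = \f{S_i}$ together with the bound $\FaultyPos{P_1}{P_2} \leq \f{S_1} + \f{S_2} < n$ to guarantee a non-faulty position among the first $\f{S_1} + \f{S_2} + 1$, and reuse the consensus-counting argument from Theorem~\ref{thm:cs}. Your version is in fact slightly more explicit than the paper's about why the guaranteed non-faulty position lies within the first $\f{S_1} + \f{S_2} + 1$ indices traversed by the loop, which is a welcome clarification rather than a deviation.
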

\begin{proof}
Due to \lsfref{fig:plcs}{agree_v}{agree}, \PROTOCALL{Plcs}{$\Cluster_1$, $\Cluster_2$, $v$, $\SF$} establishes the pre-conditions for any execution of \Name{cs-step}($\Replica_1$, $\Replica_2$, $v$) with $\Replica_1 \in \Cluster_1$ and $\Replica_2 \in \Cluster_2$. Now let $(P_1, P_2) \in \Permute{S_1} \times \Permute{S_2}$, as chosen at \lfref{fig:plcs}{permute}. As $P_i$, $i \in \{1,2\}$, is a permutation of $S_i$, we have $\f{P_i} = \f{S_i}$. Hence, we have $\FaultyPos{P_1}{P_2} \leq \f{S_1} + \f{S_2}$ and there must exist a position $j$, $0 \leq j < n$, such that $(P_1[j], P_2[j]) \in \NonFaulty{\Cluster_1} \times \NonFaulty{\Cluster_2}$. Using the correctness of \Name{cs-step} (Proposition~\ref{prop:cs_step}), we conclude that \PROTOCALL{Plcs}{$\Cluster_1$, $\Cluster_2$, $v$, $\SF$} results in cluster-sending $v$ from $\Cluster_1$ to $\Cluster_2$ in at most $\f{S_1} + \f{S_2} + 1$ cluster-sending steps. Finally, the bounds on the number of consensus steps follow from an argument analogous to the one in the proof of Theorem~\ref{thm:cs}.
\end{proof}

Next, we proceed in two steps to arrive at practical instances of \Name{Plcs} with expected constant message complexity. First, in Section~\ref{ss:nfptp}, we study the probabilistic nature of \Name{Plcs}. Then, in Section~\ref{ss:complex_lcs}, we propose practical list-pair functions and show that these functions yield instances of \Name{Plcs} with expected constant message complexity.

\subsection{On the Expected-Case Complexity of \Name{Plcs}}\label{ss:nfptp}

As the first step to determine the expected-case complexity of \Name{Plcs}, we solve the following abstract problem that captures the probabilistic argument at the core of the expected-case complexity of \Name{Plcs}:
\begin{problem}[non-faulty position trials]
Let $S_1$ and $S_2$ be lists of $\abs{S_1} = \abs{S_2} = n$ replicas. Choose permutations $(P_1, P_2) \in \Permute{S_1} \times \Permute{S_2}$ fully at random. Next, we inspect positions in $P_1$ and $P_2$ fully at random (with replacement). The \emph{non-faulty position trials problem} asks how many positions one expects to inspect to find the first non-faulty position.
\end{problem}

Let $S_1$ and $S_2$ be list of $\abs{S_1} = \abs{S_2} = n$ replicas. To answer the non-faulty position trials problem, we first take a look into the combinatorics of \emph{faulty positions} in pairs $(P_1, P_2) \in \Permute{S_1} \times \Permute{S_2}$. Let $m_1 = \f{S_1}$ and $m_2 = \f{S_2}$. By $\FC{n}{m_1}{m_2}{k}$, we denote the number of distinct pairs $(P_1, P_2) \in \Permute{S_1} \times \Permute{S_2}$ one can construct that have exactly $k$ faulty positions, hence, with $\FaultyPos{P_1}{P_2} = k$. As observed, we have $\max(m_1, m_2) \leq \FaultyPos{P_1}{P_2} \leq \min(n, m_1 + m_2)$ for any pair $(P_1, P_2) \in \Permute{S_1} \times \Permute{S_2}$. Hence, we have $\FC{n}{m_1}{m_2}{k} = 0$ for all $k < \max(m_1, m_2)$ and $k > \min(n, m_1 + m_2)$.

Now consider the step-wise construction of any permutation $(P_1, P_2) \in \Permute{S_1} \times \Permute{S_2}$ with $k$ faulty positions. First, we choose $(P_1[0], P_2[0])$, the pair at position $0$, after which we choose pairs for the remaining $n-1$ positions. For $P_i[0]$, $i \in \{1,2\}$, we can choose $n$ distinct replicas, of which $m_i$ are faulty. If we pick a non-faulty replica, then the remainder of $P_i$ is constructed out of $n-1$ replicas, of which $m_i$ are faulty. Otherwise, the remainder of $P_i$ is constructed out of $n-1$ replicas of which $m_i - 1$ are faulty. If, due to our choice of $(P_1[0], P_2[0])$, the first position is faulty, then only $k-1$ out of the $n-1$ remaining positions must be faulty. Otherwise, $k$ out of the $n-1$ remaining positions must be faulty. Combining this analysis yields four types for the first pair $(P_1[0], P_2[0])$:
\begin{enumerate}
    \item \emph{A non-faulty pair} $(P_1[0], P_2[0]) \in \NonFaulty{P_1} \times \NonFaulty{P_2}$. We have $(n - m_1)(n-m_2)$ such pairs, and we have $\FC{n-1}{m_1}{m_2}{k}$ different ways to construct the remainder of $P_1$ and $P_2$.
    \item \emph{A $1$-faulty pair} $(P_1[0], P_2[0]) \in \Faulty{P_1} \times \NonFaulty{P_2}$. We have $m_1(n-m_2)$ such pairs, and we have $\FC{n-1}{m_1-1}{m_2}{k-1}$ different ways to construct the remainder of $P_1$ and $P_2$.
    \item \emph{A $2$-faulty pair} $(P_1[0], P_2[0]) \in \NonFaulty{P_1} \times \Faulty{P_2}$. We have $(n - m_1)m_2$ such pairs, and we have $\FC{n-1}{m_1}{m_2-2}{k-1}$ different ways to construct the remainder of $P_1$ and $P_2$.
    \item \emph{A both-faulty pair} $(P_1[0], P_2[0]) \in \Faulty{P_1} \times \Faulty{P_2}$. We have $m_1m_2$ such pairs, and we have $\FC{n-1}{m_1-1}{m_2-1}{k-1}$ different ways to construct the remainder of $P_1$ and $P_2$.
\end{enumerate} 
Hence, $\FC{n}{m_1}{m_2}{k}$ with $\max(m_1, m_2) \leq k \leq \min(n, m_1 + m_2)$ is recursively defined by:
\begin{align*}
\FC{n}{m_1}{m_2}{k}
    ={}&  (n-m_1) (n-m_2) \FC{n-1}{m_1}{m_2}{k}  &&\text{(non-faulty pair)}\\
       &+ m_1 (n-m_2) \FC{n-1}{m_1-1}{m_2}{k-1} &&\text{($1$-faulty pair)}\\
       &+ (n-m_1) m_2 \FC{n-1}{m_1}{m_2-1}{k-1} &&\text{($2$-faulty pair)}\\
       &+ m_1 m_2 \FC{n-1}{m_1-1}{m_2-1}{k-1},  &&\text{(both-faulty pair)}
\end{align*}
and the base case for this recursion is $\FC{0}{0}{0}{0}= 1$.

\begin{example}
Reconsider the list pairs $(P_1, P_2)$, $(Q_1, Q_2)$, and $(R_1, R_2)$ from Example~\ref{ex:list_pairs}:
\[
    \begin{array}{r@{{} = [}c@{,}c@{,}c@{,}c@{,}c@{]}@{,\qquad}r@{{} = [}c@{,}c@{,}c@{,}c@{,}c@{]}}
        P_1&\BadT{\Replica_{1,1}}&\Replica_{1,5}&\BadT{\Replica_{1,2}}&\Replica_{1,4}&\Replica_{1,3}&
        P_2&\BadT{\Replica_{2,1}}&\Replica_{2,3}&\BadT{\Replica_{2,2}}&\Replica_{2,5}&\Replica_{2,4}\\
        Q_1&\BadT{\Replica_{1,1}}&\Replica_{1,3}&\Replica_{1,5}&\Replica_{1,4}&\BadT{\Replica_{1,2}}&
        Q_2&\Replica_{2,5}&\Replica_{2,4}&\Replica_{2,3}&\BadT{\Replica_{2,2}}&\BadT{\Replica_{2,1}}\\
        R_1&\Replica_{1,5}&\Replica_{1,4}&\Replica_{1,3}&\BadT{\Replica_{1,2}}&\BadT{\Replica_{1,1}}&
        R_2&\BadT{\Replica_{2,1}}&\BadT{\Replica_{2,2}}&\Replica_{2,3}&\Replica_{2,4}&\Replica_{2,5}.
    \end{array}
\]
Again, we have underlined the faulty replicas in each list. In $(P_1, P_2)$, we have both-faulty pairs at positions $0$ and $2$ and non-faulty pairs at positions $1$, $3$, and $4$. In $(Q_1, Q_2)$, we have a $1$-faulty pair at position $0$, non-faulty pairs at positions $1$ and $2$, a $2$-faulty pair at position $3$, and a both-faulty pair at position $4$. Finally, in $(R_1, R_2)$, we have $2$-faulty pairs at positions $0$ and $1$, a non-faulty pair at position $2$, and $1$-faulty pairs at positions $3$ and $4$.
\end{example}

Using the above combinatorics of faulty positions, we can formalize an exact solution to the \emph{non-faulty position trials problem}:

\begin{lemma}\label{lem:gpt_prob_open}
Let $S_1$ and $S_2$ be lists of $n = \n{S_1} = \n{S_2}$ replicas with $m_1 = \f{S_1}$ and $m_2 = \f{S_2}$. If $m_1 + m_2 < n$, then the non-faulty position trials problem has solution
\[ \PT{n}{m_1}{m_2} = \frac{1}{n!^2} \left(\sum_{k = \max(m_1, m_2)}^{m_1 + m_2}\ \frac{n}{n-k} \FC{n}{m_1}{m_2}{k} \right). \]
\end{lemma}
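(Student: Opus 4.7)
The plan is to apply the law of total expectation, conditioning on the randomly chosen permutation pair $(P_1, P_2) \in \Permute{S_1} \times \Permute{S_2}$. The argument splits naturally into two parts: an inner geometric computation for a fixed pair, and an outer combinatorial sum over all possible pairs grouped by their number of faulty positions.

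First, I would fix an arbitrary pair $(P_1, P_2) \in \Permute{S_1} \times \Permute{S_2}$ and let $k = \FaultyPos{P_1}{P_2}$. Since positions are inspected fully at random with replacement, each inspection independently hits a non-faulty position with probability $(n-k)/n$. The hypothesis $m_1 + m_2 < n$ combined with the bound $k \leq \min(n, m_1+m_2) = m_1+m_2$ ensures $n - k > 0$, so this is a proper Bernoulli trial with nonzero success probability. Hence the number of inspections until the first non-faulty position is a geometric random variable with success probability $(n-k)/n$, whose expectation is $n/(n-k)$.

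Next, I would average over the permutation pair. Since every pair in $\Permute{S_1} \times \Permute{S_2}$ is selected with probability $1/n!^2$, the law of total expectation gives
\[ \PT{n}{m_1}{m_2} = \frac{1}{n!^2}\sum_{(P_1,P_2)\in\Permute{S_1}\times\Permute{S_2}} \frac{n}{n - \FaultyPos{P_1}{P_2}}. \]
Regrouping the sum by the value $k = \FaultyPos{P_1}{P_2}$ and using that $\FC{n}{m_1}{m_2}{k}$ counts exactly the pairs attaining $k$ faulty positions collapses the sum to one over $k$. Restricting to the actual support $\max(m_1, m_2) \leq k \leq \min(n, m_1+m_2)$, and then using $m_1 + m_2 < n$ to replace $\min(n, m_1+m_2)$ by $m_1 + m_2$, yields the stated closed form.

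The entire argument is routine; there is no hard step and no need to unfold the recursion for $\FC{n}{m_1}{m_2}{k}$, since all the combinatorics has already been absorbed into its definition. The only point that requires care is checking that the hypothesis $m_1 + m_2 < n$ is used in exactly the right place, namely to guarantee that every pair $(P_1, P_2)$ admits at least one non-faulty position and hence that the conditional expectation $n/(n-k)$ is well defined for every term that contributes to the sum.
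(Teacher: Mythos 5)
Your proposal is correct and follows essentially the same route as the paper's proof: compute the geometric expectation $n/(n-\FaultyPos{P_1}{P_2})$ for a fixed permutation pair, average uniformly over $\Permute{S_1}\times\Permute{S_2}$ with weight $1/n!^2$, and regroup the sum by the number of faulty positions $k$ using $\FC{n}{m_1}{m_2}{k}$. Your observation that $m_1+m_2<n$ is needed precisely to keep $n-k>0$ for every contributing pair matches the paper's remark that $0 < p(P_1,P_2) \leq 1$.
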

\begin{proof}
We have $\abs{\Permute{S_1}} = \abs{\Permute{S_2}} = n!$. Consequently, we have $\abs{\Permute{S_1} \times \Permute{S_2}} = n!^2$ and we have probability $\dsfrac{1}{(n!^2)}$ to choose any pair $(P_1, P_2) \in \Permute{S_1} \times \Permute{S_2}$. Now consider such a pair $(P_1, P_2) \in \Permute{S_1} \times \Permute{S_2}$. As there are $\FaultyPos{P_1}{P_2}$ faulty positions in $(P_1, P_2)$, we have probability $p(P_1, P_2) = \dsfrac{(n - \FaultyPos{P_1}{P_2})}{n}$ to inspect a non-faulty position. Notice that $\max(m_1, m_2) \leq \FaultyPos{P_1}{P_2} \leq m_1 + m_2 < n$ and, hence, $0 < p(P_1, P_2) \leq 1$. Each of the inspected positions in $(P_1, P_2)$ is chosen fully at random. Hence, each inspection is a \emph{Bernoulli trial} with probability of success $p(P_1, P_2)$, and we expect to inspect a first non-faulty position in the $p(P_1, P_2)^{-1} = \dsfrac{n}{(n - \FaultyPos{P_1}{P_2})}$-th attempt. We conclude
\[\PT{n}{m_1}{m_2} = \frac{1}{n!^2} \left(\sum_{(P_1, P_2) \in \Permute{S_1} \times \Permute{S_2}}\ \frac{n}{n - \FaultyPos{P_1}{P_2}} \right).\]
Notice that there are $\FC{n}{m_1}{m_2}{k}$ distinct pairs $(P_1, P_2) \in \Permute{S_1} \times \Permute{S_2}$ with $\FaultyPos{P_1'}{P_2'} = k$ for each $k$, $\max(m_1, m_2) \leq k \leq m_1 + m_2 < n$. Hence, in the above expression for $\PT{n}{m_1}{m_2}$, we can group on these pairs $(P_1', P_2')$ to obtain the searched-for solution.
\end{proof}

To further solve the non-faulty position trials problem, we work towards a \emph{closed form} for $\FC{n}{m_1}{m_2}{k}$. Consider any pair $(P_1, P_2) \in \Permute{S_1} \times \Permute{S_2}$ with $\FaultyPos{P_1}{P_2} = k$ obtained via the outlined step-wise construction. Let $b_1$ be the number of \emph{$1$-faulty pairs}, let $b_2$ be the number of \emph{$2$-faulty pairs}, and let $b_{1,2}$ be the number of \emph{both-faulty pairs} in $(P_1, P_2)$. By construction, we must have
     $k = b_1 + b_2 + b_{1,2}$, $m_1 = b_1 + b_{1,2}$, and $m_2 = b_2 + b_{1,2}$ and by rearranging terms, we can derive
\begin{align*}
b_{1,2} &= (m_1 + m_2) - k,\\
    b_1 &= k - m_2,\\
    b_2 &= k - m_1.
\end{align*}

\begin{example}\label{ex:bbb}
Consider $S_1 = [\Replica_{1,1}, \dots, \Replica_{1,5}]$ with $\Faulty{S_1} = \{ \Replica_{1,1}, \Replica_{1,2}, \Replica_{1,3} \}$ and $S_2 = [\Replica_{2,1}, \dots, \Replica_{2,5} ]$ with $\Faulty{S_2} = \{\Replica_{2,1}\}$. Hence, we have $n = 5$, $m_1 = \f{S_1} = 3$, and $m_2 = \f{S_2} = 1$. If we want to create a pair $(P_1, P_2) \in \Permute{S_1} \times \Permute{S_2}$ with $k = \FaultyPos{P_1}{P_2} = 3$ faulty positions, then $(P_1, P_2)$ must have two non-faulty pairs, two $1$-faulty pairs, no $2$-faulty pairs, and one both-faulty pair. Hence, we have $n - k = 2$, $b_1 = 2$, $b_2 = 0$, and $b_{1,2} = 1$.
\end{example}

The above analysis only depends on the choice of $m_1$, $m_2$, and $k$, and not on our choice of $(P_1, P_2)$. Next, we use this analysis to express $\FC{n}{m_1}{m_2}{k}$ in terms of the number of distinct ways in which one can \emph{construct} 
\begin{enumerate}
    \renewcommand{\theenumi}{\Alph{enumi}}
    \renewcommand{\labelenumi}{(\theenumi)}
    \item\label{lt:b1} lists of $b_1$ $1$-faulty pairs out of faulty replicas from $S_1$ and non-faulty replicas from $S_2$,
    \item\label{lt:b2} lists of $b_2$ $2$-faulty pairs out of non-faulty replicas from $S_1$ and faulty replicas from $S_2$,
    \item\label{lt:b12} lists of $b_{1,2}$ both-faulty pairs out of the remaining faulty replicas in $S_1$ and $S_2$ that are not used in the previous two cases, and
    \item\label{lt:nk} lists of $n-k$ non-faulty pairs out of the remaining (non-faulty) replicas in $S_1$ and $S_2$ that are not used in the previous three cases;
\end{enumerate}
and in terms of the number of distinct ways one can \emph{merge} these lists. As the first step, we look at how many distinct ways we can merge two lists together:

\begin{lemma}\label{lem:list_merge}
For any two disjoint lists $S$ and $T$ with $\abs{S} = v$ and $\abs{T} = w$, there exist $\LM{v}{w} = \dsfrac{(v + w)!}{(v!w!)}$ distinct lists $L$ with  $\Limit{L}{S} = S$ and $\Limit{L}{T} = T$, in which $\Limit{L}{M}$, $M \in \{S,T\}$, is the list obtained from $L$ by only keeping the values that also appear in list $M$.
\end{lemma}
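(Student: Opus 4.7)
The plan is to establish a bijection between the lists $L$ satisfying the conditions and the set of $v$-element subsets of $\{0, 1, \dots, v+w-1\}$; the count then follows from the standard binomial coefficient.

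First I would observe that any list $L$ with $\Limit{L}{S} = S$ and $\Limit{L}{T} = T$ must have length exactly $v + w$: since $S$ and $T$ are disjoint, every element of $S \cup T$ appears in $L$ (as $\Limit{L}{S} = S$ and $\Limit{L}{T} = T$ force each to appear), and no other values appear in $L$ (since restricting to $S \cup T$ would otherwise produce extra elements). Thus $L$ is fully specified once we know, for each of its $v + w$ positions, whether it holds an element of $S$ or an element of $T$, because the relative order of $S$-positions is forced to match $S$ and the relative order of $T$-positions is forced to match $T$.

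Next I would define the map $L \mapsto I_L$, where $I_L \subseteq \{0, \dots, v+w-1\}$ is the set of positions in $L$ that hold elements of $S$; by the observation above, $\abs{I_L} = v$. For the inverse map, given any $v$-element subset $I \subseteq \{0, \dots, v+w-1\}$, I construct $L_I$ by placing the elements of $S$ (in the order given by $S$) at the positions of $I$ and the elements of $T$ (in the order given by $T$) at the remaining $w$ positions. Then $\Limit{L_I}{S} = S$ and $\Limit{L_I}{T} = T$ hold by construction, and the two maps are mutually inverse.

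Since the number of $v$-element subsets of a $(v+w)$-element set is $\binom{v+w}{v} = \dsfrac{(v+w)!}{(v!w!)}$, the lemma follows. The argument is essentially a classical interleaving count, so there is no real obstacle; the only thing to be careful about is justifying that the positions of $S$-elements fully determine $L$, which is where the disjointness of $S$ and $T$ is used (it ensures that the partition of positions into $S$-positions and $T$-positions is well-defined and that $\Limit{L}{S}$ and $\Limit{L}{T}$ together contain all entries of $L$).
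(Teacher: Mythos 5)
Your proof is correct, but it takes a different route from the one the paper uses. The paper first sets up the recursion $\LM{v}{w} = \LM{v-1}{w} + \LM{v}{w-1}$ (split on whether the first element of the merged list comes from $S$ or from $T$), and its formal proof is an induction on $v+w$ that verifies the closed form $\dsfrac{(v+w)!}{(v!w!)}$ satisfies this recursion via Pascal-rule-style algebra; it separately sketches, as intuition only, a bijection between merges of pairs $(P_S, P_T) \in \Permute{S} \times \Permute{T}$ and elements of $\Permute{S \union T}$, which yields the count as $\abs{\Permute{S \union T}} / \abs{\Permute{S} \times \Permute{T}}$. You instead give a direct bijection between valid lists $L$ and $v$-element subsets of the $v+w$ positions, which is self-contained, avoids both the recursion and the induction, and makes explicit exactly where disjointness of $S$ and $T$ is used (to make the partition of positions well-defined and hence the map $L \mapsto I_L$ injective). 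The paper's inductive route has the minor advantage of introducing the recursive characterization of $\LM{v}{w}$, which motivates where the formula comes from in the step-wise construction of permutation pairs used elsewhere in Section~\ref{ss:nfptp}, but as a proof of the lemma itself your argument is at least as clean and arguably more elementary.
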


Next, we look at the number of distinct ways in which one can construct lists of type~\ref{lt:b1},~\ref{lt:b2},~\ref{lt:b12}, and~\ref{lt:nk}. Consider the construction of a list of type~\ref{lt:b1}. We can choose $\binom{m_1}{b_1}$ distinct sets of $b_1$ faulty replicas from $S_1$ and we can choose $\binom{n - m_2}{b_1}$ distinct sets of $b_1$ non-faulty replicas from $S_2$. As we can order the chosen values from $S_1$ and $S_2$ in $b_1!$ distinct ways, we can construct $b_1!^2\binom{m_1}{b_1} \binom{n-m_2}{b_1}$ distinct lists of type~\ref{lt:b1}. Likewise, we can construct $b_2!^2\binom{n-m_1}{b_2} \binom{m_2}{b_2}$ distinct lists of type~\ref{lt:b2}.

\begin{example}
We continue from the setting of Example~\ref{ex:bbb}: we want to create a pair $(P_1, P_2) \in \Permute{S_1} \times \Permute{S_2}$ with $k = \FaultyPos{P_1}{P_2} = 3$ faulty positions from $S_1 = [\Replica_{1,1}, \dots, \Replica_{1,5}]$ with $\Faulty{S_1} = \{ \Replica_{1,1}, \Replica_{1,2}, \Replica_{1,3} \}$ and $S_2 = [\Replica_{2,1}, \dots, \Replica_{2,5} ]$ with $\Faulty{S_2} = \{\Replica_{2,1}\}$. To create $(P_1, P_2)$, we need to create $b_1 = 2$ pairs that are $1$-faulty. We have $\binom{m_1}{b_1} = \binom{3}{2} = 3$ sets of two faulty replicas in $S_1$ that we can choose, namely the sets $\{\Replica_{1,1}, \Replica_{1,2} \}$, $\{\Replica_{1,1}, \Replica_{1,3} \}$, and $\{\Replica_{1,2}, \Replica_{1,3} \}$. Likewise, we have $\binom{n-m_2}{b_1} = \binom{4}{2} = 6$  sets of two non-faulty replicas in $S_2$ that we can choose. Assume we choose $T_1 = \{ \Replica_{1,1}, \Replica_{1,3} \}$ from $S_1$ and $T_2 = \{\Replica_{2,4}, \Replica_{2,5} \}$ from $S_2$. The two replicas in $T_1$ can be ordered in $\n{T_1}! = 2! = 2$ ways, namely $[\Replica_{1,1}, \Replica_{1,3}]$ and $[\Replica_{1,3}, \Replica_{1,1}]$. Likewise, the two replicas in $T_2$ can be ordered in $\n{T_2}! = 2! = 2$ ways. Hence, we can construct $2 \cdot 2 = 4$ distinct lists of type A out of this single choice for $T_1$ and $T_2$, and the sequences $S_1$ and $S_2$ provide us with $\binom{m_1}{b_1} \binom{n-m_2}{b_1} = 18$ distinct choices for $T_1$ and $T_2$. We conclude that we can construct $72$ distinct lists of type A from $S_1$ and $S_2$.
\end{example}

By construction, lists of type~\ref{lt:b1} and type~\ref{lt:b2} cannot utilize the same replicas from $S_1$ or $S_2$. After choosing $b_1 + b_2$ replicas in $S_1$ and $S_2$ for the construction of lists of type~\ref{lt:b1} and~\ref{lt:b2}, the remaining $b_{1,2}$ faulty replicas in $S_1$ and $S_2$ are all used for constructing lists of type~\ref{lt:b12}. As we can order these remaining values from $S_1$ and $S_2$ in $b_{1,2}!$ distinct ways, we can construct $b_{1,2}!^2$ distinct lists of type~\ref{lt:b12} (per choice of lists of type~\ref{lt:b1} and~\ref{lt:b2}). Likewise, the remaining $n- k$ non-faulty replicas in $S_1$ and $S_2$ are all used for constructing lists of type~\ref{lt:nk}, and we can construct $(n-k)!^2$ distinct lists of type~\ref{lt:nk} (per choice of lists of type~\ref{lt:b1} and~\ref{lt:b2}).

As the final steps, we merge lists of type~\ref{lt:b1} and~\ref{lt:b2} into lists of type AB. We can do so in $\LM{b_1}{b_2}$ ways and the resultant lists have size $b_1 + b_2$. Next, we merge lists of type AB and~\ref{lt:b12} into lists of type ABC. We can do so in $\LM{b_1 + b_2}{b_{1,2}}$ ways and the resultant lists have size $k$. Finally, we merge list of type ABC and~\ref{lt:nk} together, which we can do in $\LM{k}{n-k}$ ways. From this construction, we derive that $\FC{n}{m_1}{m_2}{k}$ is equivalent to
\begin{multline*}
    \FC{n}{m_1}{m_2}{k} = b_1!^2 \binom{m_1}{b_1}\binom{n-m_2}{b_1} b_2!^2 \binom{n-m_1}{b_2}\binom{m_2}{b_2} \cdot{}\\ \LM{b_1}{b_2} b_{1,2}!^2 \LM{b_1 + b_2}{b_{1,2}} (n-k)!^2   \LM{k}{n-k},
\end{multline*}
which can be simplified to the following (see Appendix~\ref{app:closed_form} for details):

\begin{lemma}\label{lem:bad_combi}
Let $\max(m_1, m_2) \leq k \leq \min(n, m_1 + m_2)$ and let $b_1 = k - m_2$, $b_2 = k - m_1$, and $b_{1,2} = (m_1 + m_2) - k$. We have \[\FC{n}{m_1}{m_2}{k} = \frac{m_1!m_2!(n-m_1)!(n-m_2)n!}{b_1!b_2!b_{1,2}!(n-k)!}.\]
\end{lemma}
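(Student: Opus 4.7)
The plan is to start from the explicit product expression for $\FC{n}{m_1}{m_2}{k}$ derived in the paragraph immediately preceding the lemma, and then perform a purely algebraic simplification. The key preparatory step is to record the identities that follow from $b_1 = k - m_2$, $b_2 = k - m_1$, and $b_{1,2} = (m_1 + m_2) - k$: namely $m_1 - b_1 = b_{1,2}$, $m_2 - b_2 = b_{1,2}$, $n - m_2 - b_1 = n - k$, $n - m_1 - b_2 = n - k$, and $b_1 + b_2 + b_{1,2} = k$. These let me rewrite each binomial and each $\LM{\cdot}{\cdot}$ with clean arguments.

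Next, I would expand the three merge factors using Lemma~\ref{lem:list_merge}. Their product
\[
\LM{b_1}{b_2}\,\LM{b_1+b_2}{b_{1,2}}\,\LM{k}{n-k} = \frac{(b_1+b_2)!}{b_1!b_2!}\cdot\frac{k!}{(b_1+b_2)!b_{1,2}!}\cdot\frac{n!}{k!(n-k)!}
\]
telescopes to $n!/(b_1!\,b_2!\,b_{1,2}!\,(n-k)!)$, which already contributes the $n!$ and the four factorial terms in the denominator of the target formula.

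Then I would expand the binomials via the identities above, obtaining $\binom{m_1}{b_1} = m_1!/(b_1!b_{1,2}!)$, $\binom{n-m_2}{b_1} = (n-m_2)!/(b_1!(n-k)!)$, $\binom{m_2}{b_2} = m_2!/(b_2!b_{1,2}!)$, and $\binom{n-m_1}{b_2} = (n-m_1)!/(b_2!(n-k)!)$. Their product is $m_1!m_2!(n-m_1)!(n-m_2)!$ divided by $b_1!^2 b_2!^2 b_{1,2}!^2 (n-k)!^2$. This denominator is exactly cancelled by the squared factorials $b_1!^2 b_2!^2 b_{1,2}!^2 (n-k)!^2$ appearing explicitly in the original product formula, leaving the numerator $m_1!m_2!(n-m_1)!(n-m_2)!$. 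Multiplying by the already-simplified merge contribution gives the claimed expression.

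There is no real obstacle beyond careful bookkeeping: everything reduces once the four clean identities for $m_i - b_i$ and $n - m_i - b_{3-i}$ are in place, and the squared factorial terms can be checked to pair up with the binomial denominators by inspection. I would present the whole computation as a single displayed chain of equalities, grouping cancellations so that each step is visually verifiable, deferring the most tedious bookkeeping to the referenced appendix.
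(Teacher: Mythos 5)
Your computation is correct: the three merge factors telescope to $n!/(b_1!\,b_2!\,b_{1,2}!\,(n-k)!)$, the four binomials expand via $m_1-b_1=m_2-b_2=b_{1,2}$ and $n-m_2-b_1=n-m_1-b_2=n-k$ to $m_1!m_2!(n-m_1)!(n-m_2)!$ over $b_1!^2b_2!^2b_{1,2}!^2(n-k)!^2$, and the latter denominator is exactly absorbed by the squared factorials in the product expression. This is, almost verbatim, the calculation the paper carries out in Appendix~\ref{app:closed_form}. However, the appendix the paper designates as the proof of Lemma~\ref{lem:bad_combi} takes a different route: it proves the closed form by induction directly against the recursive definition of $\FC{n}{m_1}{m_2}{k}$, substituting the induction hypothesis into the four-term recurrence, pulling out the common factor via $x!=x(x-1)!$, and summing $(n-k)+b_1+b_2+b_{1,2}=n$, with separate care for the degenerate cases where one of $n-k$, $b_1$, $b_2$, $b_{1,2}$ vanishes and the corresponding term of the recurrence drops out. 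The difference matters for rigor: your argument inherits its validity from the combinatorial counting derivation of the product expression in the main text, which is presented informally (one must trust that the type-\ref{lt:b1}/\ref{lt:b2}/\ref{lt:b12}/\ref{lt:nk} decomposition plus merging counts each pair $(P_1,P_2)$ exactly once), whereas the paper's induction verifies the closed form independently of that counting argument. Your route is shorter and more illuminating about where each factor comes from; the paper's induction is self-contained and doubles as a check that the counting argument did not over- or under-count. If you present your version as the proof, you should either make the bijection underlying the product formula explicit or note that the inductive verification covers that gap.
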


We combine Lemma~\ref{lem:gpt_prob_open} and Lemma~\ref{lem:bad_combi} to conclude

\begin{proposition}\label{prop:pt_upper}
Let $S_1$ and $S_2$ be lists of $n = \n{S_1} = \n{S_2}$ replicas with $m_1 = \f{S_1}$, $m_2 = \f{S_2}$, $b_1 = k - m_2$, $b_2 = k - m_1$, and $b_{1,2} = (m_1 + m_2) - k$. If $m_1 + m_2 < n$, then the non-faulty position trials problem has solution
\[ \PT{n}{m_1}{m_2} = \frac{1}{n!^2} \left(\sum_{k = \max(m_1, m_2)}^{m_1 + m_2}\ \frac{n}{n-k} \frac{m_1!m_2!(n-m_1)!(n-m_2)!n!}{b_1!b_2!b_{1,2}!(n-k)!}\right).\]
\end{proposition}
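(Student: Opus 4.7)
The plan is to derive the claimed formula as a direct assembly of the two preceding lemmas, with essentially no new probabilistic or combinatorial reasoning required at this stage. First, I would invoke Lemma~\ref{lem:gpt_prob_open}, which, under the hypothesis $m_1 + m_2 < n$, already establishes that
\[ \PT{n}{m_1}{m_2} = \frac{1}{n!^2} \left(\sum_{k = \max(m_1, m_2)}^{m_1 + m_2}\ \frac{n}{n-k} \FC{n}{m_1}{m_2}{k} \right). \]
This reduces the task to substituting a closed-form expression for $\FC{n}{m_1}{m_2}{k}$ into the summand.

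Next, I would verify compatibility between the two lemmas before performing the substitution. The range of the summation index from Lemma~\ref{lem:gpt_prob_open} is $\max(m_1, m_2) \leq k \leq m_1 + m_2$, and Lemma~\ref{lem:bad_combi} provides the closed form for $\FC{n}{m_1}{m_2}{k}$ precisely on $\max(m_1, m_2) \leq k \leq \min(n, m_1 + m_2)$. Since the hypothesis $m_1 + m_2 < n$ gives $\min(n, m_1 + m_2) = m_1 + m_2$, the two ranges coincide, and Lemma~\ref{lem:bad_combi} is applicable to every term of the sum. The definitions $b_1 = k - m_2$, $b_2 = k - m_1$, and $b_{1,2} = (m_1 + m_2) - k$ used in the statement of the proposition match those introduced in Lemma~\ref{lem:bad_combi} verbatim, so substituting
\[ \FC{n}{m_1}{m_2}{k} = \frac{m_1!\,m_2!\,(n-m_1)!\,(n-m_2)!\,n!}{b_1!\,b_2!\,b_{1,2}!\,(n-k)!} \]
into the expression from Lemma~\ref{lem:gpt_prob_open} yields exactly the claimed formula for $\PT{n}{m_1}{m_2}$.

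There is no real obstacle to this argument: the probabilistic core (the decomposition into independent Bernoulli trials over random permutation pairs and random position inspections) is handled inside Lemma~\ref{lem:gpt_prob_open}, and the hard combinatorial work (counting faulty-position configurations via types \ref{lt:b1}--\ref{lt:nk} and list merges) has already been carried out for Lemma~\ref{lem:bad_combi}. The only thing to be careful about is checking the consistency of summation ranges and the symbolic identification of the variables $b_1$, $b_2$, $b_{1,2}$ across the two lemmas, after which the proof reduces to a single syntactic substitution.
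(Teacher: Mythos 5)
Your proposal is correct and matches the paper exactly: the paper itself derives Proposition~\ref{prop:pt_upper} simply by combining Lemma~\ref{lem:gpt_prob_open} with the closed form of Lemma~\ref{lem:bad_combi}, and your check that $m_1 + m_2 < n$ makes the two summation ranges coincide is the only detail worth noting. Nothing further is needed.
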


Finally, we use the non-faulty position trials problem to derive

\begin{proposition}\label{prop:lpcs_complex}
Let $\Cluster_1, \Cluster_2$ be disjoint clusters and let $\SF$ be a list-pair function with $(S_1, S_2) \GETS \SF(\Cluster_1, \Cluster_2)$ and $n = \n{S_1} = \n{S_2}$. If communication is synchronous and $\f{S_1} + \f{S_2} < n$, then the expected number of cluster-sending steps performed by \PROTOCALL{Plcs}{$\Cluster_1$, $\Cluster_2$, $v$, $\SF$} is less than $\PT{n}{\f{S_1}}{\f{S_2}}$.
\end{proposition}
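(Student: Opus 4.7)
The plan is to compare \Name{Plcs}---which traverses positions of a random pair $(P_1, P_2)$ in the fixed order $0, 1, 2, \ldots$---with the non-faulty position trials problem, which inspects positions uniformly \emph{with replacement}. Because sampling without replacement is at least as fast, I expect the expected number $E[T]$ of cluster-sending steps performed by \Name{Plcs} to satisfy $E[T] \leq \PT{n}{\f{S_1}}{\f{S_2}}$, with strict inequality as soon as any faulty position is possible.

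First, I would apply Proposition~\ref{prop:cs_step} to observe that, with the pre-conditions established at \lsfref{fig:plcs}{agree_v}{agree}, the loop at \lsfref{fig:plcs}{loop}{loop_end} halts exactly when an iteration reaches an index $i$ with $(P_1[i], P_2[i]) \in \NonFaulty{\Cluster_1} \times \NonFaulty{\Cluster_2}$. Since $\f{S_1} + \f{S_2} < n$ forces at least one non-faulty position in any $(P_1, P_2) \in \Permute{S_1} \times \Permute{S_2}$, the number $T$ of cluster-sending steps equals $i_0 + 1$, where $i_0$ is the smallest non-faulty position in $(P_1, P_2)$.

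Next, I would compute the conditional expectation $E[T \mid \FaultyPos{P_1}{P_2} = k]$ via symmetry. For any permutation $\sigma$ of $\{0, \ldots, n-1\}$, the map $(P_1, P_2) \mapsto (P_1 \circ \sigma, P_2 \circ \sigma)$ is a measure-preserving bijection on $\Permute{S_1} \times \Permute{S_2}$ that carries the set of faulty positions to its $\sigma^{-1}$-image. Hence, conditional on having $k$ faulty positions, those positions form a uniformly random size-$k$ subset of $\{0, \ldots, n-1\}$, the expected location of the smallest non-faulty position is $k/(n-k+1)$, and $E[T \mid \FaultyPos{P_1}{P_2} = k] = (n+1)/(n-k+1)$.

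Finally, I would compare with $\PT{n}{\f{S_1}}{\f{S_2}}$. The proof of Lemma~\ref{lem:gpt_prob_open} already shows the PT problem has conditional expectation $n/(n-k)$ given $\FaultyPos{P_1}{P_2} = k$, and an elementary manipulation yields $(n+1)/(n-k+1) \leq n/(n-k)$, strict whenever $k \geq 1$. Because both conditional expectations are integrated against the same marginal distribution of $\FaultyPos{P_1}{P_2}$ (with frequencies $\FC{n}{\f{S_1}}{\f{S_2}}{k}/n!^2$ on the relevant range), a term-by-term comparison yields $E[T] < \PT{n}{\f{S_1}}{\f{S_2}}$ whenever a faulty position is possible. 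The most delicate step is the symmetry argument: establishing rigorously that the conditional law of faulty positions is uniform over size-$k$ subsets, which requires setting up the correct group action on $\Permute{S_1} \times \Permute{S_2}$ and verifying that it preserves both the uniform measure and the cardinality of the faulty-position set.
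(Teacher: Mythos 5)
Your proposal is correct, but it takes a genuinely different and more quantitative route than the paper. The paper's proof is a coupling argument: it observes that reading a uniformly random pair $(P_1,P_2)\in\Permute{S_1}\times\Permute{S_2}$ in index order is distributionally the same as fixing the set of replica pairs and inspecting its positions uniformly at random \emph{without} replacement, and then concludes by the (stated but not computed) monotonicity fact that without-replacement sampling reaches the first non-faulty position no later in expectation than the with-replacement sampling defining $\PT{n}{\f{S_1}}{\f{S_2}}$. You instead compute the sequential scheme's conditional expectation exactly: by the position-permutation symmetry, conditional on $\FaultyPos{P_1}{P_2}=k$ the faulty positions are a uniform size-$k$ subset, giving $E[T\mid k]=\frac{n+1}{n-k+1}$, which you compare termwise against the $\frac{n}{n-k}$ appearing in Lemma~\ref{lem:gpt_prob_open} under the common marginal $\FC{n}{\f{S_1}}{\f{S_2}}{k}/n!^2$. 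This buys an explicit closed form for the expected number of steps (not just an upper bound) and makes rigorous precisely the step the paper waves at; it also handles repeated replicas in $S_1,S_2$ cleanly, since your group action lives on positions rather than on replica identities, whereas the paper's ``without loss of generality, assume distinct replicas'' deserves a word of justification. Two small points to tighten: the loop of \Name{Plcs} halts \emph{no later than} the first non-faulty position (a faulty pair may happen to behave correctly and trigger confirmation early), so you should write $T\le i_0+1$ rather than $T=i_0+1$---this only strengthens the upper bound; and, as you correctly note, strictness requires that a faulty position be possible, so in the degenerate case $\f{S_1}=\f{S_2}=0$ both quantities equal $1$ and the proposition's ``less than'' should really be ``at most''---a defect of the statement shared by the paper's own proof, which likewise only establishes an upper bound.
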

\begin{proof}
Let $(P_1, P_2) \in \Permute{S_1} \times \Permute{S_2}$. We notice that \Name{Plcs} inspects positions in $P_1$ and $P_2$ in a different way than the non-faulty trials problem: at \lfref{fig:plcs}{choose}, positions are inspected one-by-one in a predetermined order and not fully at random (with replacement). Next, we will argue that $\PT{n}{\f{S_1}}{\f{S_2}}$ provides an upper bound on the expected number of cluster-sending steps regardless of these differences. Without loss of generality, we assume that $S_1$ and $S_2$ each have $n$ distinct replicas. Consequently, the pair $(P_1, P_2)$ represents a set $R$ of $n$ distinct replica pairs taken from $\Cluster_1 \times \Cluster_2$. We notice that each of the $n!$ permutations of $R$ is represented by a single pair $(P_1', P_2') \in \Permute{S_1} \times \Permute{S_2}$. 

Now consider the selection of positions in $(P_1, P_2)$ fully at random, but without replacement. This process will yield a list $[j_0, \dots, j_{n-1}] \in \Permute{[0, \dots, n-1]}$ of positions fully at random. Let $Q_i = [P_i[j_0], \dots, P_i[j_{n-1}]]$, $i \in \{1,2\}$. We notice that the pair $(Q_1, Q_2)$ also represents $R$ and we have $(Q_1, Q_2) \in \Permute{S_1} \times \Permute{S_2}$. Hence, by choosing a pair $(P_1, P_2) \in \Permute{S_1} \times \Permute{S_2}$, we choose set $R$ fully at random and, at the same time, we choose the order in which replica pairs in $R$ are inspected fully at random.

As the final step in showing that $\PT{n}{\f{S_1}}{\f{S_2}}$ is an upper-bound to the expected number of cluster-sending steps performed by \Name{Plcs}, we notice that the number of expected positions inspected in the non-faulty position trials problem decreases if we choose positions without replacement, as done by \Name{Plcs}.
 \end{proof}

\subsection{Practical Instances of \Name{Plcs}}\label{ss:complex_lcs}

As the last step in providing practical instances of \Name{Plcs}, we need to provide practical list-pair functions to be used in conjunction with \Name{Plcs}. We provide two such functions that address most practical environments. Let $\Cluster_1, \Cluster_2$ be disjoint clusters, let $n_{\min} = \min(\n{\Cluster_1}, \n{\Cluster_2})$, and let $n_{\max} = \max(\n{\Cluster_1}, \n{\Cluster_2})$. We provide list-pair functions 
    \begin{align*}
        \SFi(\Cluster_1, \Cluster_2) &\mapsto (\Repeat{n_{\min}}{\List{\Cluster_1}}, \Repeat{n_{\min}}{\List{\Cluster_2}}),\\
        \SFa(\Cluster_1, \Cluster_2) &\mapsto (\Repeat{n_{\max}}{\List{\Cluster_2}}, \Repeat{n_{\max}}{\List{\Cluster_2}}),
    \end{align*}
in which $\Repeat{n}{L}$ denotes the first $n$ values in the list obtained by repeating list $L$. Next, we illustrate usage of these functions:

\begin{example}
Consider clusters $\Cluster_1, \Cluster_2$ with $S_1 = \List{\Cluster_1} = [ \Replica_{1,1}, \dots, \Replica_{1,9}]$ and $S_2 = \List{\Cluster_2} = [ \Replica_{2,1}, \dots, \Replica_{2, 4}]$. We have $\SFi(\Cluster_1, \Cluster_2) = ([\Replica_{1,1}, \dots, \Replica_{1, 4}], \List{\Cluster_2})$ and $\SFa(\Cluster_1, \Cluster_2) = (\List{\Cluster_1}, [\Replica_{2,1}, \dots, \Replica_{2, 4},\Replica_{2,1}, \dots, \Replica_{2, 4},\Replica_{2,1}])$.
\end{example}

Next, we combine $\SFi$ and $\SFa$ with \Name{Plcs}, show that in practical environments $\SFi$ and $\SFa$ satisfy the requirements put on list-pair functions in Proposition~\ref{prop:lpcs} to guarantee termination and cluster-sending, and use these results to determine the expected constant complexity of the resulting instances of \Name{Plcs}.

\begin{theorem}\label{thm:plcs_main}
Let $\Cluster_1, \Cluster_2$ be disjoint clusters with synchronous communication.
\begin{enumerate}
\item If $\min(\n{\Cluster_1}, \n{\Cluster_2}) > 2\max(\f{\Cluster_1}, \f{\Cluster_2})$, $(S_1, S_2) \GETS \SFi(\Cluster_1, \Cluster_2)$, and $n = \n{S_1} = \n{S_2}$,  then $n > 2\f{S_1}$, $n > 2\f{S_2}$, $n > \f{S_1} + \f{S_2}$, and the expected number of cluster-sending steps performed by \PROTOCALL{Plcs}{$\Cluster_1$, $\Cluster_2$, $v$, $\SFi$} is upper bounded by $4$.
\item If $\min(\n{\Cluster_1}, \n{\Cluster_2}) > 3\max(\f{\Cluster_1}, \f{\Cluster_2})$, $(S_1, S_2) \GETS \SFi(\Cluster_1, \Cluster_2)$, and $n = \n{S_1} = \n{S_2}$,  then $n > 3\f{S_1}$, $n > 3\f{S_2}$, $n > \f{S_1} + \f{S_2}$, and the expected number of cluster-sending steps performed by \PROTOCALL{Plcs}{$\Cluster_1$, $\Cluster_2$, $v$, $\SFi$} is upper bounded by $2\frac{1}{4}$.
\item If $\n{\Cluster_1} > 3\f{\Cluster_1}$, $\n{\Cluster_2} > 3\f{\Cluster_2}$, $(S_1, S_2) \GETS \SFa(\Cluster_1, \Cluster_2)$, and $n = \n{S_1} = \n{S_2}$, then either $\n{\Cluster_1} \geq \n{\Cluster_2}$, $n > 3\f{S_1}$, and $n > 2\f{S_2}$; or $\n{\Cluster_2} \geq \n{\Cluster_1}$, $n > 2\f{S_1}$, and $n > 3\f{S_2}$. In both cases, $n > \f{S_1} + \f{S_2}$ and the expected number of cluster-sending steps performed by \PROTOCALL{Plcs}{$\Cluster_1$, $\Cluster_2$, $v$, $\SFa$} is upper bounded by $3$.
\end{enumerate}
Each of these instance of \Name{Plcs} results in cluster-sending $v$ from $\Cluster_1$ to $\Cluster_2$.
\end{theorem}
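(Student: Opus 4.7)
The plan is to treat each of the three cases in parallel, splitting each into a structural verification of the list-pair $(S_1, S_2) \GETS \SF(\Cluster_1, \Cluster_2)$ and an algebraic upper bound on the expected number of cluster-sending steps. Correctness of the cluster-sending itself will then follow from Proposition~\ref{prop:lpcs} once the precondition $n > \f{S_1} + \f{S_2}$ is established in every case.

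For the structural stage, Cases~1 and~2 use $\SFi$, so $n = \min(\n{\Cluster_1}, \n{\Cluster_2}) \leq \n{\Cluster_i}$, the two lists contain distinct replicas, and in the worst case $\f{S_i} \leq \f{\Cluster_i} \leq \max(\f{\Cluster_1}, \f{\Cluster_2})$; the hypothesis $n > c \max(\f{\Cluster_1}, \f{\Cluster_2})$ with $c \in \{2, 3\}$ then yields $n > c\f{S_i}$ and $n > \f{S_1} + \f{S_2}$ immediately. For Case~3 with $\SFa$ and $n = \max(\n{\Cluster_1}, \n{\Cluster_2})$, I treat the subcase $\n{\Cluster_1} \geq \n{\Cluster_2}$ (the other is symmetric): $S_1 = \List{\Cluster_1}$ gives $\f{S_1} = \f{\Cluster_1} < n/3$, while $S_2$ cycles through $\List{\Cluster_2}$ for $n$ positions, placing each faulty replica of $\Cluster_2$ in at most $\lceil n/\n{\Cluster_2} \rceil$ positions; a short calculation using $\n{\Cluster_2} > 3\f{\Cluster_2}$ then shows $\f{S_2} < n/2$, so that $n > \f{S_1} + \f{S_2}$ as required.

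For the expected-cost stage, I invoke Proposition~\ref{prop:lpcs_complex} to upper-bound the expected number of cluster-sending steps by $\PT{n}{\f{S_1}}{\f{S_2}}$, and then use the closed form from Proposition~\ref{prop:pt_upper} combined with Lemma~\ref{lem:bad_combi} to reduce the claim to an algebraic estimate of $\PT{n}{m_1}{m_2}$ under the ratio constraints $\f{S_i}/n < 1/c_i$ established above. The three target constants $4$, $2\tfrac{1}{4}$, and $3$ coincide with the limiting values of $n^2/((n-m_1)(n-m_2))$ as $(m_1/n, m_2/n)$ approaches $(\tfrac{1}{2}, \tfrac{1}{2})$, $(\tfrac{1}{3}, \tfrac{1}{3})$, and $(\tfrac{1}{3}, \tfrac{1}{2})$ respectively, which indicates the shape of the bound to be established.

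The main obstacle is this last step: obtaining a sharp \emph{upper} bound on $\PT{n}{m_1}{m_2}$. Applying Jensen's inequality to the convex map $k \mapsto n/(n-k)$ only yields $\PT{n}{m_1}{m_2} \geq n^2/((n-m_1)(n-m_2))$, which is the wrong direction, so the bound must instead come from direct manipulation of the sum in Proposition~\ref{prop:pt_upper}. I expect the cleanest route to be a monotonicity argument showing $\PT{n}{m_1}{m_2}$ is non-decreasing in each $m_i$, reducing to the extremal settings $m_i = \lfloor (n-1)/c_i \rfloor$, followed by a careful estimate of those finite-$n$ expressions that shows they approach the claimed constants from below. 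Once this bound is in hand, Proposition~\ref{prop:lpcs} immediately yields that each instance of \Name{Plcs} cluster-sends $v$ from $\Cluster_1$ to $\Cluster_2$.
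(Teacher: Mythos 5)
Your structural stage matches the paper's argument essentially step for step: for $\SFi$ the lists hold $n = \min(\n{\Cluster_1}, \n{\Cluster_2})$ distinct replicas, so $\f{S_i} \leq \f{\Cluster_i}$ and the ratio hypotheses transfer directly to $n > c\f{S_i}$ and hence $n > \f{S_1} + \f{S_2}$; for $\SFa$ the paper likewise counts how often each faulty replica of the smaller cluster recurs in the repeated list (writing $n = q\n{\Cluster_2} + r$ and bounding $\f{S_2} = q\f{\Cluster_2} + \f{T}$) to obtain $n > 2\f{S_2}$. The reduction of the expected cost to $\PT{n}{\f{S_1}}{\f{S_2}}$ via Proposition~\ref{prop:lpcs_complex} is also the paper's route, and your observation that Jensen's inequality only yields the lower bound $n^2/((n-m_1)(n-m_2))$ correctly identifies why the upper bound cannot be read off from convexity.

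The gap is that the decisive step is never carried out. You reduce the theorem to showing $\PT{n}{m_1}{m_2} < 4$ (resp.\ $2\frac{1}{4}$, $3$) under the ratio constraints, and then state that you \emph{expect} a monotonicity argument in each $m_i$ plus a ``careful estimate'' of the extremal finite-$n$ expressions to finish. Neither piece is supplied: monotonicity of $\PT{n}{m_1}{m_2}$ in $m_1$ and $m_2$ is plausible but not proved (it is not apparent from the sum in Proposition~\ref{prop:pt_upper}, whose summands change both in value and in index range as the $m_i$ vary), and the extremal evaluation is precisely where all the work lies. The paper closes this by an exact computation: for the first statement it reduces to $n = 2f+1$ with $m_1 = m_2 = f$ and proves, via a Vandermonde-identity manipulation of the closed form (Appendix~\ref{app:app_simply}), that $\PT{2f+1}{f}{f} = 4 - \frac{2}{f+1} - \frac{f!^2}{(2f)!} < 4$, with analogous computations for the other two constants. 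Note moreover that $\PT{n}{m_1}{m_2}$ lies strictly \emph{above} your Jensen benchmark $n^2/((n-m_1)(n-m_2))$ (this is exactly the content of the Remark comparing $\mathbf{E}_{\Name{Pcs}}$ and $\mathbf{E}_{\Name{Plcs}}$), so the claimed constants are met only because the exact closed form happens to stay under them; a loose estimate in place of the exact identity could easily overshoot, e.g., past $2\frac{1}{4}$ in the second case. Until you either prove the monotonicity claim and evaluate the extremal cases exactly, or otherwise establish a sharp upper bound on $\PT{n}{m_1}{m_2}$, the proof is incomplete.
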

\begin{proof}
First, we prove the properties of $\SFi$ and $\SFa$ claimed in the three statements of the theorem. In the first and second statement of the theorem, we have $\min(\n{\Cluster_1}, \n{\Cluster_2}) > c \max(\f{\Cluster_1}, \f{\Cluster_2})$, $c \in \{2,3\}$. Let $(S_1, S_2) \GETS \SFi(\Cluster_1, \Cluster_2)$ and $n = \n{S_1} = \n{S_2}$. By definition of $\SFi$, we have $n = \min(\n{\Cluster_1}, \n{\Cluster_2})$, in which case $S_i$, $i \in \{1,2\}$, holds $n$ distinct replicas from $\Cluster_i$. Hence, we have $\f{\Cluster_i} \geq \f{S_i}$ and, as $n > c \max(\f{\Cluster_1}, \f{\Cluster_2}) \geq c\f{\Cluster_i}$, also $n > c\f{S_i}$. Finally, as $n > 2\f{S_1}$ and $n > 2\f{S_2}$, also $2n > 2\f{S_1} + 2\f{S_2}$ and $n > \f{S_1} + \f{S_2}$ holds.

In the last statement of the theorem, we have $\n{\Cluster_1} > 3\f{\Cluster_1}$ and $\n{\Cluster_2} > 3\f{\Cluster_2}$. Without loss of generality, we  assume $\n{\Cluster_1} \geq \n{\Cluster_2}$. Let $(S_1, S_2) \GETS \SFa(\Cluster_1, \Cluster_2)$ and $n = \n{S_1} = \n{S_2}$. By definition of $\SFa$, we have $n = \max(\n{\Cluster_1}, \n{\Cluster_2}) = \n{\Cluster_1}$. As $n = \n{\Cluster_1}$, we have  $S_1 = \List{\Cluster_1}$. Consequently, we also have $\f{S_1} = \f{\Cluster_1}$ and, hence, $\n{S_1} > 3\f{\Cluster_1}$. Next, we will show that $\n{S_2} > 2\f{S_2}$. Let $q = \n{\Cluster_1} \div \n{\Cluster_2}$ and $r = \n{\Cluster_1} \bmod \n{\Cluster_2}$. We note that $\Repeat{n}{\List{\Cluster_2}}$ contains $q$ full copies of $\List{\Cluster_2}$ and one partial copy of $\List{\Cluster_2}$. Let $T \subset \Cluster_2$ be the set of replicas in this partial copy. By construction, we have $\n{S_2} = q\n{\Cluster_2} + r > q3\f{\Cluster_2} + \f{T} + \nf{T}$ and $\f{S_2} = q\f{\Cluster_2} + \f{T}$ with $\f{T} \leq \min(\f{\Cluster_2}, r)$. As $q > 1$ and $\f{\Cluster_2} \geq \f{T}$, we have $q\f{\Cluster_2} \geq \f{\Cluster_2} \geq \f{T}$. Hence, $\n{S_2} >  3q\f{\Cluster_2} + \f{T} + \nf{T} > 2q\f{\Cluster_2} + \f{\Cluster_2} + \f{T} + \nf{T} \geq  2(q\f{\Cluster_2} + \f{T}) + \nf{T} \geq 2\f{S_2}$. Finally, as $n > 3\f{S_1}$ and $n > 2\f{S_2}$, also $2n > 3\f{S_1} + 2\f{S_2}$ and $n > \f{S_1} + \f{S_2}$ holds.

Now, we prove the upper bounds on the expected number of cluster-sending steps for \PROTOCALL{Plcs}{$\Cluster_1$, $\Cluster_2$, $v$, $\SFi$} with $\min(\n{\Cluster_1}, \n{\Cluster_2}) > 2\max(\f{\Cluster_1}, \f{\Cluster_2})$.  By Proposition~\ref{prop:lpcs_complex}, the expected number of cluster-sending steps is upper bounded by $\PT{n}{\f{S_1}}{\f{S_2}}$. In the worst case, we have $n = 2f + 1$ with $f = \f{S_1} = \f{S_2}$. Hence, the expected number of cluster-sending steps is upper bounded by $\PT{2f+1}{f}{f}$, $f \geq 0$. We claim that $\PT{2f+1}{f}{f}$ simplifies to $\PT{2f+1}{f}{f} = 4 - \dsfrac{2}{(f+1)} - \dsfrac{f!^2}{(2f)!}$ (see Appendix~\ref{app:app_simply} for details). Hence, for all $S_1$ and $S_2$, we have $\PT{n}{\f{S_1}}{\f{S_2}} < 4$. An analogous argument can be used to prove the other upper bounds.
\end{proof}

Note that the third case of Theorem~\ref{thm:plcs_main} corresponds with cluster-sending between arbitrary-sized resilient clusters that each operate using Byzantine fault-tolerant consensus protocols.

\begin{remark}
The upper bounds on the expected-case complexity of instances of \Name{Plcs} presented in Theorem~\ref{thm:plcs_main} match the upper bounds for  \Name{Pcs} presented in Corollary~\ref{cor:cs}. This does not imply that the expected-case complexity for these protocols is the same, however, as the probability distributions that yield these expected-case complexities are very different. To see this, consider a system in which all clusters have $n$ replicas of which $f$, $n = 2f+1$, are faulty. Next, we denote the expected number of cluster-sending steps of protocol $P$ by $\mathbf{E}_{P}$, and we have
\begin{align*}
    \mathbf{E}_{\Name{Pcs}} &= \frac{(2f+1)^2}{(f+1)^2} = 4 - \frac{4f+3}{(f+1)^2};\\
    \mathbf{E}_{\Name{Plcs}} &= \PT{2f+1}{f}{f} = 4 - \frac{2}{(f+1)} - \frac{f!^2}{(2f)!}.
\end{align*}
In Figure~\ref{fig:expected_plot} (Section~\ref{sec:related}), \emph{left} and \emph{middle}, we have illustrated this difference by plotting the expected-case complexity of \Name{Pcs} and \Name{Plcs} for systems with equal-sized clusters. In practice, we see that the expected-case complexity for \Name{Pcs} is slightly lower than the expected-case complexity for \Name{Plcs}.
\end{remark}

\section{Dealing with Unreliable and Asynchronous Communication}\label{sec:async}
In the previous sections, we introduced \Name{Pcs}, \Name{Ppcs}, and \Name{Plcs}, three probabilistic cluster-sending protocols with expected constant message complexity. As presented, these protocols are designed to operate in a synchronous environment: if a cluster $\Cluster_1$ wants to send a value $v$ to $\Cluster_2$, then the replicas in $\NonFaulty{\Cluster_1}$ use time-based decisions to determine whether a cluster-sending step was successful. Next, we consider their usage in environments with asynchronous inter-cluster communication due to which messages can get arbitrary delayed, duplicated, or dropped. 

We notice that the presented protocols \emph{only} depend on synchronous communication to minimize communication: at the core of the correctness of \Name{Pcs}, \Name{Ppcs}, and \Name{Plcs} is the cluster-sending step performed by \Name{cs-step}, which does not make any assumptions on communication (Proposition~\ref{prop:cs_step}). Consequently, \Name{Pcs}, \Name{Ppcs}, and \Name{Plcs} can easily be generalized to operate in environments with asynchronous communication:
\begin{enumerate}
\item First, we observe that message duplication and out-of-order delivery has no impact on the cluster-sending step performed by \Name{cs-step}. Hence, we do not need to take precautions against such asynchronous behavior.
\item If communication is asynchronous, but reliable (messages do not get lost, but can get duplicated, be delivered out-of-order, or get arbitrarily delayed), both \Name{Ppcs} and \Name{Plcs} will be able to always perform cluster-sending in a finite number of steps. If communication becomes unreliable, however, messages sent between non-faulty replicas can get lost and all cluster-sending steps can fail. To deal with this, replicas in $\Cluster_1$ simply continue cluster-sending steps until a step succeeds, which will eventually happen in an expected constant number steps whenever communication becomes reliable again.
\item If communication is asynchronous, then messages can get arbitrarily delayed. Fortunately, practical environments operate with large periods of reliable communication in which the majority of the messages arrive within some bounded delay unknown to $\Cluster_1$ and $\Cluster_2$. Hence, replicas in $\Cluster_1$ can simply assume some delay $\delta$. If this delay is too short, then a cluster-sending step can \emph{appear to fail} simply because the proof of receipt is still under way. In this case, cluster-sending will still be achieved when the proof of receipt arrives, but spurious cluster-sending steps can be initiated in the meantime. To reduce the number of such spurious cluster-sending steps, all non-faulty replicas in $\Cluster_1$ can use \emph{exponential backup} for the delay such that the $i$-th cluster-sending step must finish $\delta 2^i$ time units after $\Cluster_1$ reached agreement on sending $v$ to $\Cluster_2$.
\item Finally, asynchronous environments often necessitate rather high assumptions on the message delay $\delta$. Consequently, the duration of a single failed cluster-sending step performed by \Name{cs-step} will be high. Here, a trade-off can be made between \emph{message complexity} and \emph{duration} by starting several rounds of the cluster-sending step at once. E.g., when communication is sufficiently reliable, then all three protocols are expected to finish in four rounds or less, due to which starting four rounds initially will sharply reduce the duration of the protocol with only a constant increase in expected message complexity.
\end{enumerate}

\section{Comparison with Related Work}\label{sec:related}

In the previous sections, we presented our cluster-sending protocols \Name{Pcs}, \Name{Ppcs}, and \Name{Pcs}. Next, we compare them with existing approaches towards communication between resilient clusters.  Although there is abundant literature on distributed systems and on consensus-based resilient systems (e.g.,~\cite{scaling,wild,untangle,book,encybd,distdb,distalgo,distbook}), there is only limited work on communication \emph{between} resilient systems~\cite{chainspace,vldb,disc_csp} (see also Table~\ref{tbl:summary}).

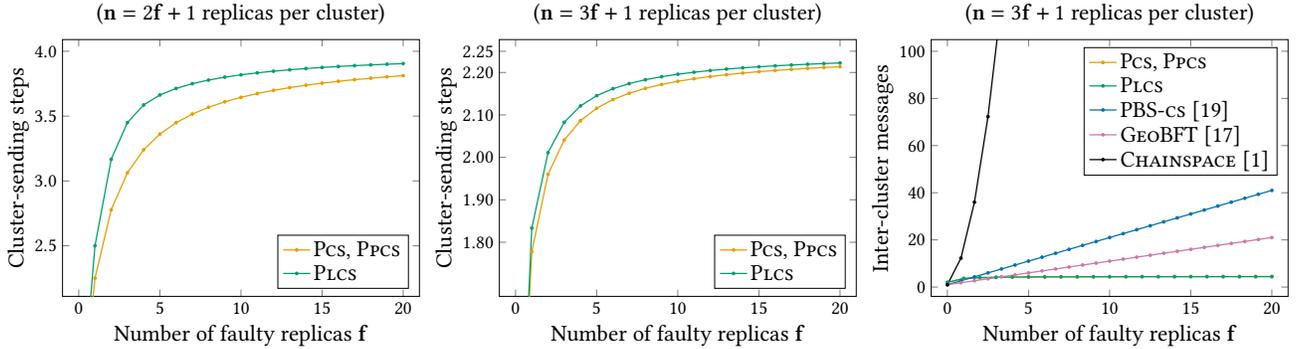
\begin{figure}
    \centering
    \makebox[0pt]{
    \begin{tikzpicture}[plot]
        \begin{axis}[title={($\mathbf{n} = 2\mathbf{f} + 1$ replicas per cluster)},
                     xlabel={Number of faulty replicas $\mathbf{f}$},
                     ylabel={\smash{Cluster-sending steps}},
                     ymin=2.2,ymax=4,xmin=0,xmax=20,
                     legend pos=south east,
                     ytick={2.5,3,3.5,4},
                     y tick label style={
                        /pgf/number format/precision=1,
                        /pgf/number format/fixed,
                        /pgf/number format/fixed zerofill
                    }]
            \addplot table[x={f},y={pcs}] {\dataCaseTwo};
            \addplot table[x={f},y={plcs}] {\dataCaseTwo};
            \legend{\Name{Pcs}{, }\Name{Ppcs} ,\Name{Plcs}};
        \end{axis}
    \end{tikzpicture}\quad
    \begin{tikzpicture}[plot]
        \begin{axis}[title={($\mathbf{n} = 3\mathbf{f} + 1$ replicas per cluster)},
                     xlabel={Number of faulty replicas $\mathbf{f}$},
                     ylabel={\smash{Cluster-sending steps}},
                     ymin=1.7,ymax=2.25,xmin=0,xmax=20,
                     legend pos=south east,
                     ytick={1.8,1.9,2,2.1,2.2,2.25},
                     y tick label style={
                        /pgf/number format/precision=2,
                        /pgf/number format/fixed,
                        /pgf/number format/fixed zerofill
                    }]
            \addplot table[x={f},y={pcs}] {\dataCaseThree};
            \addplot table[x={f},y={plcs}] {\dataCaseThree};
            \legend{\Name{Pcs}{, }\Name{Ppcs},\Name{Plcs}};
        \end{axis}
    \end{tikzpicture}\quad
    \begin{tikzpicture}[plot]
        \begin{axis}[title={($\mathbf{n} = 3\mathbf{f} + 1$ replicas per cluster)},
                     xlabel={Number of faulty replicas $\mathbf{f}$},
                     ylabel={\smash{Inter-cluster messages}},
                     xmin=0,xmax=20,domain=0:20,ymax=100,
                     legend pos=north east]
            \addplot table[x={f},y expr={2 * \thisrow{pcs}}] {\dataCaseThree};
            \addplot table[x={f},y expr={2 * \thisrow{plcs}}] {\dataCaseThree};
            \addplot {x+x+1};
            \addplot {x+1};
            \addplot {(3*x+1)*(3*x+1)};
            \legend{\Name{Pcs}{, }\Name{Ppcs} ,\Name{Plcs},\Name{PBS-cs}~\cite{disc_csp},\Name{GeoBFT}~\cite{vldb},\Name{Chainspace}~\cite{chainspace}};
        \end{axis}
    \end{tikzpicture}}
    \caption{Comparison of the expected-case complexity of \Name{Plcs} and \Name{Pcs} (\emph{left} and \emph{middle}) and a comparison with the complexity of cluster-sending protocols proposed in related work (\emph{right}).}\label{fig:expected_plot}
\end{figure}

Consider sending a value $v$ between equal-sized clusters $\Cluster_1$ and $\Cluster_2$. First, the \emph{multicast-based cluster-sending protocol} of \Name{Chainspace}~\cite{chainspace} requires reliable communication and can perform cluster-sending using $\n{\Cluster_1}\n{\Cluster_2}$ messages. Next, the \emph{worst-case optimal cluster-sending protocols} of Hellings et al.~\cite{disc_csp} also require reliable communication, but can cluster-send using only  $\f{\Cluster_1} + \f{\Cluster_2} + 1$ messages. Finally, the \emph{global sharing protocol} of \Name{GeoBFT}~\cite{vldb} assumes that each cluster uses a primary-backup consensus protocol (e.g., \Name{Pbft}~\cite{pbftj}) and optimizes for the case in which the coordinating primary of $\Cluster_1$ is non-faulty. In this optimistic case, \Name{GeoBFT} can perform cluster-sending using only $\f{\Cluster_2} + 1$ messages. To deal with faulty primaries and unreliable communication, \Name{GeoBFT} employs a costly remote view-change protocol, however. Finally, we notice that \emph{cluster-sending} can be solved using well-known Byzantine primitives such as consensus, interactive consistency, and Byzantine broadcasts~\cite{pbftj,netbound,dolevstrong,byzgen,generals}. These primitives are much more costly than the above cluster-sending protocols, however, and typically require huge amounts of costly communication between all involved replicas. In Figure~\ref{fig:expected_plot}, \emph{right}, we compare the inter-cluster message complexity of existing cluster-sending protocols to the protocols presented in this work. Clearly, our protocols sharply reduce inter-cluster communication, as they have an expected constant complexity. Furthermore, our protocols can effectively deal with unreliable asynchronous communication with low cost.

In parallel to the development of traditional resilient systems and permissioned blockchains, there has been promising work on sharding in permissionless blockchains such as \Name{Bitcoin}~\cite{bitcoin} and \Name{Ethereum}~\cite{ethereum}. Examples include techniques for enabling reliable cross-chain coordination via sidechains, blockchain relays, atomic swaps, atomic commitment, and cross-chain deals~\cite{blockchaindb,btcrelay,atomswap,cross_deal,cosmos,polkadot,atomic_comm}. Unfortunately, these techniques are deeply intertwined with the design goals of permissionless blockchains in mind (e.g., cryptocurrency-oriented), and are not readily applicable to traditional consensus-based Byzantine clusters.

\section{Conclusion}\label{sec:concl}

In this paper, we presented several probabilistic cluster-sending protocols that can facilitate communication between Byzantine fault-tolerant clusters with expected constant communication between clusters. For practical environments, our protocols can support worst-case linear communication between clusters, which is optimal, and deal with asynchronous and unreliable communication. The low cost of our cluster-sending protocols enables the development and deployment of high-performance systems that are constructed out of Byzantine fault-tolerant clusters, e.g., fault-resilient geo-aware sharded data processing systems.

\bibliography{resources}

\appendix

\section{The proof of Proposition~\ref{prop:cs_step}}\label{app:first}
\begin{proof} We prove the three post-conditions separately.

            \textbf{(i)} 
                We assume that communication is reliable, $\Replica_1 \in \NonFaulty{\Cluster_1}$, and $\Replica_2 \in \NonFaulty{\Cluster_2}$. Hence, $\Replica_1$ sends message $m \GETS \SignMessage{send}{v,\ \Cluster_2}{\Cluster_1}$ to $\Replica_2$ (\lfref{fig:cs_step}{send_v}). In the receive phase (\lsfref{fig:cs_step}{receive_event}{send_confirm}), replica $\Replica_2$ \emph{receives} message $m$ from $\Replica_1$. Replica $\Replica_2$ uses local consensus on $m$ to replicate $m$ among all replicas $\Cluster_2$ and, along the way, to constructs a \emph{proof of receipt} $m_p \GETS \SignMessage{proof}{m}{\Cluster_ 2}$. As all replicas in $\NonFaulty{\Cluster_2}$ participate in this local consensus, all replicas in $\NonFaulty{\Cluster_2}$ will decide \Decide{receive} on $v$ from $\Cluster_1$. Finally, the proof $m_p$ is returned to $\Replica_1$. In the confirmation phase (\lsfref{fig:cs_step}{replied_event}{confirm}), replica $\Replica_1$ receives the proof of receipt $m_p$. Next, $\Replica_1$ uses local consensus on $m_p$ to replicate $m_p$ among all replicas in $\NonFaulty{\Cluster_1}$, after which all replicas in $\NonFaulty{\Cluster_1}$ decide \Decide{confirm} on sending $v$ to $\Cluster_2$

            \textbf{(ii)} A replica in $\NonFaulty{\Cluster_2}$ only decides \Decide{receive} on $v$ after consensus is reached on a message $m \GETS \SignMessage{send}{v,\ \Cluster_2}{\Cluster_1}$ (\lfref{fig:cs_step}{receive}). This message $m$ not only contains the value $v$, but also the identity of the recipient cluster $\Cluster_2$. Due to the usage of certificates and the pre-condition, the message $m$ cannot be created without the replicas in $\NonFaulty{\Cluster_1}$ deciding \Decide{agree} on sending $v$ to $\Cluster_2$.
            
            \textbf{(iii)}
                A replica in $\NonFaulty{\Cluster_1}$ only decides \Decide{confirm} on $v$ after consensus is reached on a \emph{proof of receipt} message $m_p \GETS \SignMessage{proof}{m}{\Cluster_ 2}$ (\lfref{fig:cs_step}{confirm}). This consensus step will complete for all replicas in $\Cluster_1$ whenever communication becomes reliable. Hence, all replicas in $\NonFaulty{\Cluster_1}$ will eventually decide \Decide{confirm} on $v$. Due to the usage of certificates, the message $m_p$ cannot be created without cooperation of the replicas in $\NonFaulty{\Cluster_2}$. The replicas in $\NonFaulty{\Cluster_2}$ only cooperate in constructing $m_p$ as part of the consensus step of \lfref{fig:cs_step}{receive_v}. Upon completion of this consensus step, all replicas in $\NonFaulty{\Cluster_2}$ will decide \Decide{receive} on $v$.
\end{proof}

\section{The proof of Lemma~\ref{lem:cs_prune}}\label{app:prune}
\begin{proof}
The statement of this Lemma assumes that the pre-conditions for any execution of \Name{cs-step}($\Replica_1$, $\Replica_2$, $v$) with $\Replica_1 \in \Cluster_1$ and $\Replica_2 \in \Cluster_2$ are established. Hence, by Proposition~\ref{prop:cs_step}, \Name{cs-step}($\Replica_1$, $\Replica_2$, $v$) will cluster-send $v$ if $\Replica_1 \in \NonFaulty{\Cluster_1}$ and $\Replica_2 \in \NonFaulty{\Cluster_2}$. If the cluster-sending step fails to cluster-send $v$, then one of the replicas involved must be faulty, proving the first property. Next, let $\Replica_1 \in \Cluster_1$ and consider a set $S \subseteq \Cluster_2$ of $\n{S} = \f{\Cluster_2} + 1$ replicas such that, for all $\Replica_2 \in S$, \Name{cs-step}($\Replica_1$, $\Replica_2$, $v$) fails to cluster-send $v$. Let $S' = S \difference \Faulty{\Cluster_2}$ be the non-faulty replicas in $S$. As $\n{S} > \f{\Cluster_2}$, we have $\n{S'} \geq 1$ and there exists a $\Replica_2' \in S'$. As $\Replica_2' \notin \Faulty{\Cluster_2}$ and \Name{cs-step}($\Replica_1$, $\Replica_2'$, $v$) fails to cluster-send $v$, we must have $\Replica_1 \in \Faulty{\Cluster_1}$ by the first property, proving the second property. An analogous argument proves the third property.
\end{proof}

\section{The proof of Lemma~\ref{lem:list_merge}}\label{app:list_merge}

To get the intuition behind the closed form of Lemma~\ref{lem:list_merge}, we take a quick look at the combinatorics of \emph{list-merging}. Notice that we can merge lists $S$ and $T$ together by either first taking an element from $S$ or first taking an element from $T$. This approach towards list-merging yields the following recursive solution to the list-merge problem:
\[ \LM{v}{w} = \begin{cases} \LM{v-1}{w} + \LM{v}{w-1} &\text{if $v > 0$ and $w > 0$};\\
                             1 &\text{if $v = 0$ or $w = 0$}.
               \end{cases} \]
Consider lists $S$ and $T$ with $\abs{S} = v$ and $\abs{T} = w$ distinct values. We have $\abs{\Permute{S}} = v!$, $\abs{\Permute{T}} = w!$, and $\abs{\Permute{S \union T}} = (v+w)!$. We observe that every list-merge of $(P_S, P_T) \in \Permute{S} \times \Permute{T}$ is a unique value in $\Permute{S \union T}$. Furthermore, every value in $\Permute{S \union T}$ can be constructed by such a list-merge. As we have $\abs{\Permute{S} \times \Permute{T}} = v!w!$, we derive the closed form 
\[\LM{v}{w} = \frac{(v + w)!}{(v!w!)}\]
of Lemma~\ref{lem:list_merge}. Next, we formally prove this closed form.

\begin{proof}
We prove this by induction. First, the base cases $\LM{0}{w}$ and $\LM{v}{0}$. We have
\begin{align*}
    \LM{0}{w} &= \frac{(0 + w)!}{0! w!} = \frac{w!}{w!} = 1; \\
    \LM{v}{0} &= \frac{(v + 0)!}{v! 0!} = \frac{v!}{v!} = 1.
\end{align*}
Next, we assume that the statement of the lemma holds for all non-negative integers $v', w'$ with $0 \leq v' + w' \leq j$. Now consider non-negative integers $v, w$ with $v + w = j + 1$. We assume that $v > 0$ and $w > 0$, as otherwise one of the base cases applies. Hence, we have
\begin{align*}
    \LM{v}{w} &= \LM{v-1}{w} + \LM{v}{w-1}.
\intertext{We apply the induction hypothesis on the terms $\LM{v-1}{w}$ and $\LM{v}{w-1}$ and obtain}
    \LM{v}{w} &= \left(\frac{((v-1)+ w)!}{(v - 1)! w!}\right) + \left(\frac{(v + (w - 1))!}{v! (w-1)!}\right).
\intertext{Next, we apply $x = x(x-1)!$ and simplify the result to obtain}
    \LM{v}{w} &= \left(\frac{v (v + w - 1)!}{v! w!}\right) + \left(\frac{w (v + w - 1)!}{v! w!}\right)\\
              &= \left(\frac{(v + w) (v + w - 1)!}{v! w!}\right) = \frac{(v+w)!}{v! w!},
\end{align*}
which completes the proof.
\end{proof}

\section{The proof of Lemma~\ref{lem:bad_combi}}

\begin{proof}
We write $f(n, m_1, m_2, k)$ for the closed form in the statement of this lemma and we prove the statement of this lemma by induction. First, the base case $\FC{0}{0}{0}{0}$. In this case, we have $n = m_1 = m_2 = k = 0$ and, hence, $b_1 = b_2 = b_{1,2} = 0$, and we conclude $f(0, 0, 0, 0) = 1 = \FC{0}{0}{0}{0}$.

Now assume $\FC{n'}{m_1'}{m_2'}{k'} = f(n', m_1', m_2', k')$ for all $n' < n$ and all $k'$ with $\max(m_1', m_2') \leq k' \leq \min(n', m_1' + m_2')$. Next, we prove $\FC{n}{m_1}{m_2}{k} = f(n, m_1, m_2, k)$ with $\max(m_1, m_2) \leq k \leq \min(n, m_1 + m_2)$. We use the shorthand $\mathbb{G} = \FC{n}{m_1}{m_2}{k}$ and we have
\begin{align*}
\mathbb{G}  ={}& (n-m_1) (n-m_2) \FC{n-1}{m_1}{m_2}{k}  &&\text{(non-faulty pair)}\\
      &+ m_1 (n-m_2) \FC{n-1}{m_1-1}{m_2}{k-1} &&\text{($1$-faulty pair)}\\
      &+ (n-m_1) m_2 \FC{n-1}{m_1}{m_2-1}{k-1} &&\text{($2$-faulty pair)}\\
      &+ m_1 m_2 \FC{n-1}{m_1-1}{m_2-1}{k-1}.  &&\text{(both-faulty pair)}
\end{align*}
Notice that
          if $n = k$, then the non-faulty pair case does not apply, as $\FC{n-1}{m_1}{m_2}{k} = 0$, and evaluates to zero.
Likewise, if $b_1 = 0$, then the $1$-faulty pair case does not apply, as $\FC{n-1}{m_1-1}{m_2}{k-1} = 0$, and evaluates to zero; 
          if $b_2 = 0$, then the $2$-faulty pair case does not apply, as $\FC{n-1}{m_1}{m_2-1}{k-1} = 0$, and evaluates to zero;
and, finally, if $b_{1,2} = 0$, then the both-faulty pair case does not apply, as $\FC{n-1}{m_1-1}{m_2-1}{k-1} = 0$, and evaluates to zero.

First, we consider the case in which $n > k$, $b_1 > 0$, $b_2 > 0$, and $b_{1,2} > 0$. Hence, each of the four cases apply and evaluate to non-zero values. We directly apply the induction hypothesis on $\FC{n-1}{m_1}{m_2}{k}$, $\FC{n-1}{m_1-1}{m_2}{k-1}$, $\FC{n-1}{m_1}{m_2-1}{k-1}$, and $\FC{n-1}{m_1-1}{m_2-1}{k-1}$, and obtain
\begin{align*}
\mathbb{G}
    ={}&  (n-m_1) (n-m_2) \frac{m_1!m_2!(n-1-m_1)!(n-1-m_2)!(n-1)!}{b_1!b_2!b_{1,2}!(n-1-k)!}\\
       & + m_1 (n-m_2)     \frac{(m_1-1)!m_2!(n-m_1)!(n-1-m_2)!(n-1)!}{(b_1-1)!b_2!b_{1,2}!(n-1-(k-1))!}\\
       &+ (n-m_1) m_2     \frac{m_1!(m_2-1)!(n-1-m_1)!(n-m_2)!(n-1)!}{b_1!(b_2-1)!b_{1,2}!(n-1-(k-1))!}\\
       &+ m_1 m_2         \frac{(m_1 - 1)!(m_2-1)!(n-m_1)!(n-m_2)!(n-1)!}{b_1!b_2!(b_{1,2}-1)!(n-1-(k-1))!}.
\intertext{We apply $x! = x(x-1)!$ and further simplify and obtain}
\mathbb{G}
    ={}&  \frac{m_1!m_2!(n-m_1)!(n-m_2)!(n-1)!}{b_1!b_2!b_{1,2}!(n-1-k)!}
        + \frac{m_1!m_2!(n-m_1)!(n-m_2)!(n-1)!}{(b_1-1)!b_2!b_{1,2}!(n-k)!}\\
       &+ \frac{m_1!m_2!(n-m_1)!(n-m_2)!(n-1)!}{b_1!(b_2-1)!b_{1,2}!(n-k)!}
        + \frac{m_1!m_2!(n-m_1)!(n-m_2)!(n-1)!}{b_1!b_2!(b_{1,2}-1)!(n-k)!}\\\displaybreak[0]
    ={}&  (n-k)   \frac{m_1!m_2!(n-m_1)!(n-m_2)!(n-1)!}{b_1!b_2!b_{1,2}!(n-k)!}
        + b_1\frac{m_1!m_2!(n-m_1)!(n-m_2)!(n-1)!}{b_1!b_2!b_{1,2}!(n-k)!}\\
       &+ b_2     \frac{m_1!m_2!(n-m_1)!(n-m_2)!(n-1)!}{b_1!b_2!b_{1,2}!(n-k)!}
       + b_{1,2} \frac{m-1!m_2!(n-m_1)!(n-m_2)!(n-1)!}{b_1!b_2!b_{1,2}!(n-k)!}.
\intertext{We have $k = b_1 + b_2 + b_{1,2}$ and, hence, $n = (n-k) + b_1 + b_2 + b_{1,2}$ and we conclude}
\mathbb{G}   ={}&((n-k) + b_1 + b_2 + b_{1,2}) \frac{m_1!m_2!(n-m_1)!(n-m_2)!(n-1)!}{b_1!b_2!b_{1,2}!(n-k)!}\\
    ={}&n\frac{m_1!m_2!(n-m_1)!(n-m_2)!(n-1)!}{b_1!b_2!b_{1,2}!(n-k)!} = \frac{m_1!m_2!(n-m_1)!(n-m_2)!n!}{b_1!b_2!b_{1,2}!(n-k)!}.
\end{align*}

Next, in all other cases, we can repeat the above derivation while removing the terms corresponding to the cases that evaluate to $0$. By doing so, we end up with the expression
\begin{align*}
\mathbb{G} &= \frac{(\left(\sum_{t \in T}\ t\right) m_1! m_2! (n-m_1)! (n-m_2)! (n-1)!}{b_1! b_2! b_{1,2}! (n-k)!}.
\end{align*}
in which $T$ contains the term $(n-k)$ if $n > k$ (the non-faulty pair case applies), the term $b_1$ if $b_1 > 0$ (the $1$-faulty case applies), the term $b_2$ if $b_2 > 0$ (the $2$-faulty case applies), and the term $b_{1,2}$ if $b_{1,2} > 0$ (the both-faulty case applies). As each term $(n-k)$, $b_1$, $b_2$, and $b_{1,2}$ is in $T$ whenever the term is non-zero, we have $\sum_{t \in T}\ t = (n-k) + b_1 + b_2 + b_{1,2} = n$. Hence, we can repeat the steps of the above derivation in all cases, and complete the proof.
\end{proof}

\section{Simplification of the Closed Form of $\FC{n}{m_1}{m_2}{k}$}\label{app:closed_form}
Let $g$ be the expression
\[
b_1!^2 \binom{m_1}{b_1}\binom{n-m_2}{b_1} b_2!^2 \binom{n-m_1}{b_2}\binom{m_2}{b_2}
                   \LM{b_1}{b_2}
                      b_{1,2}!^2 \LM{b_1 + b_2}{b_{1,2}}
                      (n-k)!^2   \LM{k}{n-k},
\]
as stated right above Lemma~\ref{lem:bad_combi}. We will show that $g$ is equivalent to the closed form of $\FC{n}{m_1}{m_2}{k}$, as stated in Lemma~\ref{lem:bad_combi}.
\begin{proof}
We use the shorthands $\mathbf{T}_1 = \binom{m_1}{b_1}\binom{n-m_2}{b_1}$ and $\mathbf{T}_2 = \binom{n-m_1}{b_2}\binom{m_2}{b_2}$, and we have
\begin{align*}
g
                &= b_1!^2 \mathbf{T}_1
                      b_2!^2 \mathbf{T}_2
                      \LM{b_1}{b_2}
                      b_{1,2}!^2
                      \LM{b_1 + b_2}{b_{1,2}}
                      (n-k)!^2
                      \LM{k}{n-k}.
\intertext{We apply Lemma~\ref{lem:list_merge} on terms $\LM{b_1}{b_2}$, $\LM{b_1 + b_2}{b_{1,2}}$, and $\LM{k}{n-k}$, apply $k = b_1 +b_2 + b_{1,2}$, and simplify to derive}
g
                &= b_1!^2 \mathbf{T}_1
                      b_2!^2 \mathbf{T}_2
                      \frac{(b_1 + b_2)!}{b_1!b_2!}
                      b_{1,2}!^2
                      \frac{(b_1 + b_2 + b_{1,2})!}{(b_1 + b_2)!b_{1,2}!}
                      (n-k)!^2
                      \frac{(k + n - k)!}{k!(n-k)!}\\
                 &= b_1! \mathbf{T}_1
                   b_2! \mathbf{T}_2
                   b_{1,2}!
                   (n-k)!
                   n!.
\intertext{Finally, we expand the binomial terms $\mathbf{T}_1$ and $\mathbf{T}_2$, apply $b_{1,2} = m_1 - b_1 = m_2 - b_2$ and $k = m_1 + b_2 = m_2 + b_1$, and simplify to derive}
g
                &= b_1! \frac{m_1!}{b_1!(m_1 - b_1)!}\frac{(n-m_2)!}{b_1!(n-m_2 - b_1)!}
                   b_2! \frac{(n-m_1)!}{b_2!(n-m_1 - b_2)!} \frac{m_2!}{b_2!(m_2 - b_2)!}
                   b_{1,2}!
                   (n-k)!
                   n!\\
                &= \frac{m_1!}{b_{1,2}!}\frac{(n-m_2)!}{b_1!(n-k)!}
                   \frac{(n-m_1)!}{b_2!(n-k)!} \frac{m_2!}{b_{1,2}!}
                   b_{1,2}!
                   (n-k)!
                   n!
                = \frac{m_1!m_2!(n-m_1)!(n-m_2)!n!}{b_1!b_2!b_{1,2}!(n-k)!},
\end{align*}
which completes the proof.
\end{proof}

\section{The Closed Form of $\PT{2f+1}{f}{f}$}\label{app:app_simply}
Here, we shall prove that \[\PT{2f+1}{f}{f} = 4 - \frac{2}{(f+1)} - \frac{f!^2}{(2f)!}.\]
\begin{proof}
By Proposition~\ref{prop:pt_upper} and some simplifications, we have 
\begin{align*}
 \PT{2f+1}{f}{f} &= \frac{1}{(2f+1)!^2} \left(\sum_{k = f}^{2f}\ \frac{2f+1}{2f+1-k} \frac{f!^2(f+1)!^2(2f+1)!}{(k-f)!^2(2f - k)!(2f+1-k)!}\right).\\
\intertext{First, we apply $x! = x(x-1)!$, simplify, and obtain}
\PT{2f+1}{f}{f} &=\frac{f!^2(2f+1)}{(2f+1)!} \left(\sum_{k = f}^{2f}\ \frac{(f+1)!^2}{(k-f)!^2(2f+1-k)!^2}\right)\\
   &=\frac{f!^2}{(2f)!} \left(\sum_{k = 0}^{f}\ \frac{(f+1)!^2}{k!^2(f+1-k)!^2}\right)
    =\frac{f!^2}{(2f)!} \left(\sum_{k = 0}^{f}\ \binom{f+1}{k}^2\right).
\intertext{Next, we apply $\binom{m}{n} = \binom{m}{m-n}$, extend the sum by one term, and obtain}   
\PT{2f+1}{f}{f} &=\frac{f!^2}{(2f)!} \left(\left(\sum_{k = 0}^{f+1}\ \binom{f+1}{k}\binom{f+1}{f+1 - k}\right) - \binom{f+1}{f+1}\binom{f+1}{0}\right).
\intertext{Then, we apply Vandermonde's Identity to eliminate the sum and obtain}
\PT{2f+1}{f}{f} &=\frac{f!^2}{(2f)!} \left(\binom{2f+2}{f+1} - 1\right).
\intertext{Finally, we apply straightforward simplifications and obtain}
\PT{2f+1}{f}{f} &=\frac{f!^2}{(2f)!} \frac{(2f+2)!}{(f+1)!(f+1)!} - \frac{f!^2}{(2f)!}
    =\frac{f!^2}{(2f)!} \frac{(2f)!(2f+1)(2f+2)}{f!^2(f+1)^2} - \frac{f!^2}{(2f)!}\\
   &=\frac{(2f+1)(2f+2)}{(f+1)^2} - \frac{f!^2}{(2f)!}
    =\frac{(2f+2)^2}{(f+1)^2} - \frac{2f+2}{(f+1)^2} - \frac{f!^2}{(2f)!}\\
&=\frac{4(f+1)^2}{(f+1)^2} - \frac{2(f+1)}{(f+1)^2} - \frac{f!^2}{(2f)!}
    =4 - \frac{2}{f+1} - \frac{f!^2}{(2f)!},
\end{align*}
which completes the proof.
\end{proof}

\end{document}